\title{Social Distancing as a Network Population Game in a Socially Connected World}
\author{Zhijun Wu\footnotemark[1]\\*[4pt] Department of Mathematics\\ Iowa State University\\ Ames, Iowa, USA}
\newtheorem{mydefinition}{Defintion}[section]
\begin{document}

\maketitle

\renewcommand{\thefootnote}{\fnsymbol{footnote}}

\footnotetext[1]{Corresponding author: zhijun@iastate.edu}

\renewcommand{\thefootnote}{\arabic{footnote}}

\begin{abstract}
While social living is considered to be an indispensable part of human life in today's ever-connected world, social distancing has recently received much public attention on its importance since the outbreak of the coronavirus pandemic. In fact, social distancing has long been practiced in nature among solitary species, and been taken by human as an effective way of stopping or slowing down the spread of infectious diseases. Here we consider a social distancing problem for how a population, when in a world with a network of social sites, decides to visit or stay at some sites while avoiding or closing down some others so that the social contacts across the network can be minimized. We model this problem as a network population game, where every individual tries to find some network sites to visit or stay so that he/she can minimize all his/her social contacts. In the end, an optimal strategy can be found for every one, when the game reaches an equilibrium. We show that a large class of equilibrium strategies can be obtained by selecting a set of social sites that forms a so-called maximal $r$-regular subnetwork. The latter includes many well studied network types, which are easy to identify or construct, and can be completely disconnected (with $r = 0$) for the most strict isolation, or allow certain degree of connectivities (with $r > 0$) for more flexible distancing. We derive the equilibrium conditions of these strategies, and analyze their rigidity and flexibility on different types of $r$-regular subnetworks. We also extend our model to weighted networks, when different contact values are assigned to different network sites.
\end{abstract}

%\subjclass{92D30, 92D25, 91A22, 91A43, 90C35, 90C27}%%{epidemiology, population dynamics, evolutionary games, games involing graphs, math programming on graphs, combinatorial optimization}
%
\begin{keywords}
Social distancing, epidemic and pandemic prevention, network population games, social networks, regular networks, optimal distancing strategies, rigidity, flexibility, and fragility of distancing strategies
\end{keywords}

\begin{AMS}
92D30, 92D25, 91A22, 91A43, 90C35, 90C27
\end{AMS}

\section{Introduction}
\label{introduction}

Humans participate in all sorts of social activities, especially in modern societies. We go to school, go to work, attend meetings, go shopping, go to restaurants, watch shows, movies, or sports, etc. The world is formed by a huge number of social sites that host these activities. These sites are also closely connected for people to contact, meet, and socialize across. The world is a well-connected network. People's daily life is simply a busy agenda of events at different event sites in this network, with different visiting frequencies for different sites. As such, the world can be modeled as a social network, with the nodes representing the social sites and the links the connections among the social sites. The social life of an individual can be described by the frequencies of the individual to visit these social sites, and the social behavior of the population can be described by these frequencies averaged over the whole population.

Mathematically, we can represent a social network with a graph $G = (V,E)$, where $V = \{1,\ldots,n\}$ is a set of nodes corresponding to the social sites of the network, and $E=\{(i,j):\ i\ {\rm and}\ j\ {\rm connected}\}$ a set of links between the nodes representing the social connections among the social sites. Let $x\in R^n$, $x\ge 0$, $\Sigma_i x_i = 1$, be a vector describing the social behavior of an individual on $G$, with $x_i$ being the frequency of the individual to visit or stay at node $i$ of $G$. Then, we can define a vector $y\in R^n$, $y\ge 0$, $\Sigma_i y_i = 1$, for the social behavior of the whole population, with $y_i$ being the average frequency of the population to visit or stay at node $i$ of $G$. Let $A$ be the adjacency matrix of $G$, $A_{i,j} = 1$ if $(i,j)\in E$, $A_{i,j} = 0$ if $(i,j)\not \in E$, and $A_{i,i} = \alpha \in [0,1]$ for all $i = 1, \ldots, n$. Then, in a $y$-population, the amount of social contacts an $x$-individual can make at node $i$ must be $x_iA_{i,i}\hspace{1pt}y_i$, the contacts with the population on node $i$, plus $x_iA_{i,j}\hspace{1pt}y_j$, the contacts with the population on other nodes $j$, all together equal to $\Sigma_j x_iA_{i,j} y_j$. Across the network, the social contacts this individual can make must be $\Sigma_i\hspace{1pt} \Sigma_j x_i {A_{i,j} y_j} = x^TAy$.   

Consider $x$ as the social strategy of an individual, and $y$ the social strategy of the population. Consider $\pi(x,y) = x^TAy$ as a payoff function. We can then define a network population game where each individual of the population tries to maximize his/her social payoff. The latter can be achieved when an optimal strategy $x^*$ is found for every individual. The strategy for the population then becomes $x^*$ as well, and a Nash equilibrium is reached. A Nash equilibrium of this game is thus a strategy $x^*$ such that
\begin{eqnarray*}
\label{social_networking_game}
\pi(x^*,x^*) \ge \pi(x,x^*),\ \ \forall x\in S
\end{eqnarray*}
where $S = \{x\in R^n:\ \Sigma_i x_i = 1,\ x_i\ge 0,\ i = 1,\ldots,n\}$ is the set of all possible social strategies. We call this game a social networking game on network $G$. For social networking, we encourage interactions across different social sites and do not count the contacts among individuals in the same sites, and therefore, set $A_{i,i} = 0$ for all $i = 1,\ldots,n$. 

Different from social networking, social distancing is to reduce social contacts instead. Therefore, for a given social network, social distancing can be considered as a problem to find a set of social sites in the network where interactions within and between these sites can be minimized. It can therefore be modeled as a network population game where each individual tries to minimize his/her social contacts $\pi(x,y) = x^TAy$. A Nash equilibrium of this game is then a strategy $x^*$ such that
\begin{eqnarray*}
\label{social_distancing_game}
\pi(x^*,x^*) \le \pi(x,x^*),\ \ \forall x\in S.
\end{eqnarray*}
We call this game a social distancing game on network $G$. For social distancing, we also count the contacts among individuals in the same sites, and therefore, set $A_{i,i} = 1$ for all $i = 1,\ldots,n$. Let $\bar{G} = (\bar{V},\bar{E})$ be the complement of $G$, with $\bar{V} = V$ and $\bar{E} = \{(i,j):\ (i,j)\not \in E\}$. Then, it is easy to see that a social distancing game on network $G$ is equivalent to a social networking game on $\bar{G}$, and vice versa (See Figure~\ref{Fig_1}).  

\begin{figure}[h]
\centering
\includegraphics[width=0.25\textwidth]{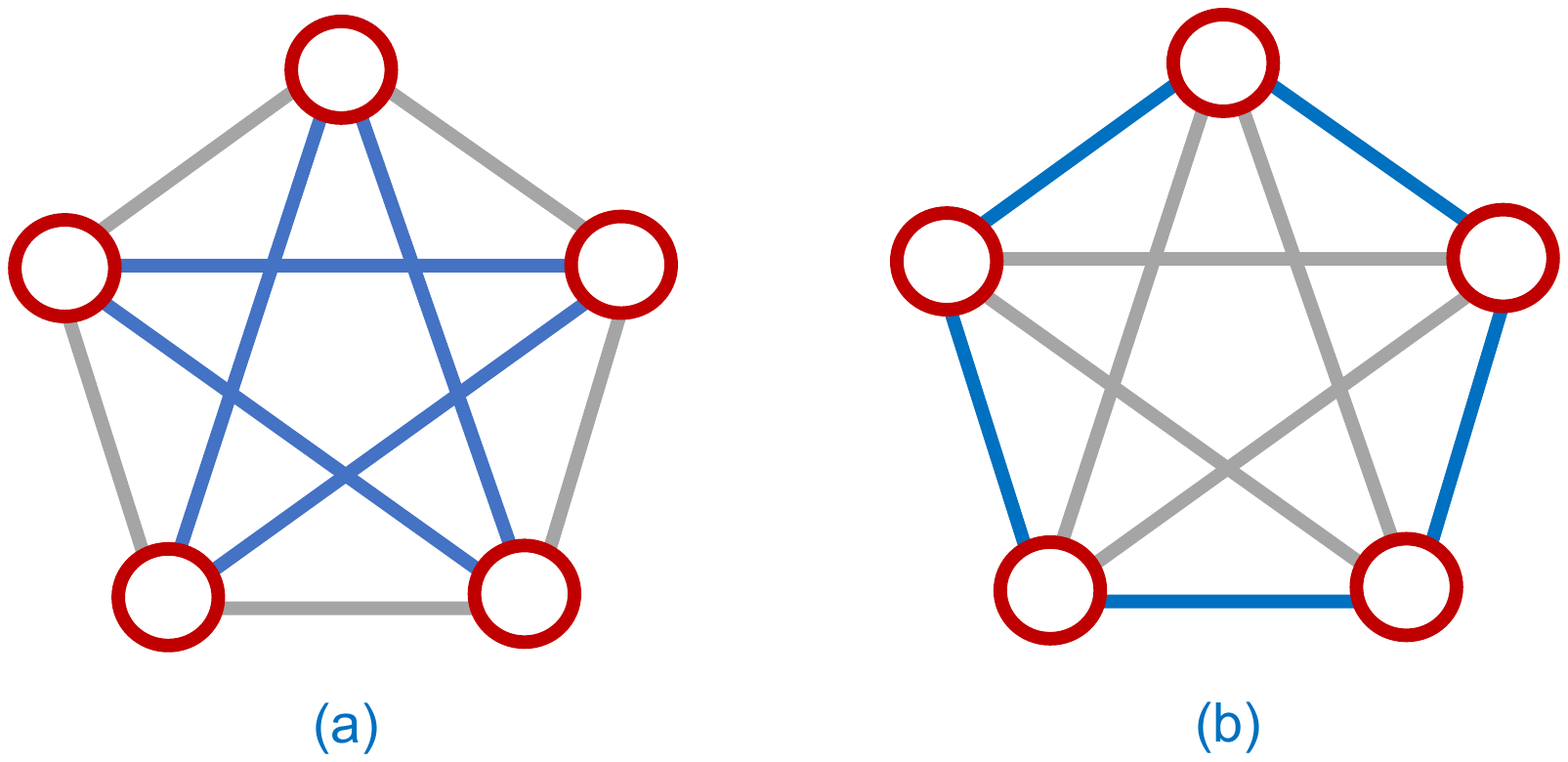}\hspace{8ex}
\includegraphics[width=0.25\textwidth]{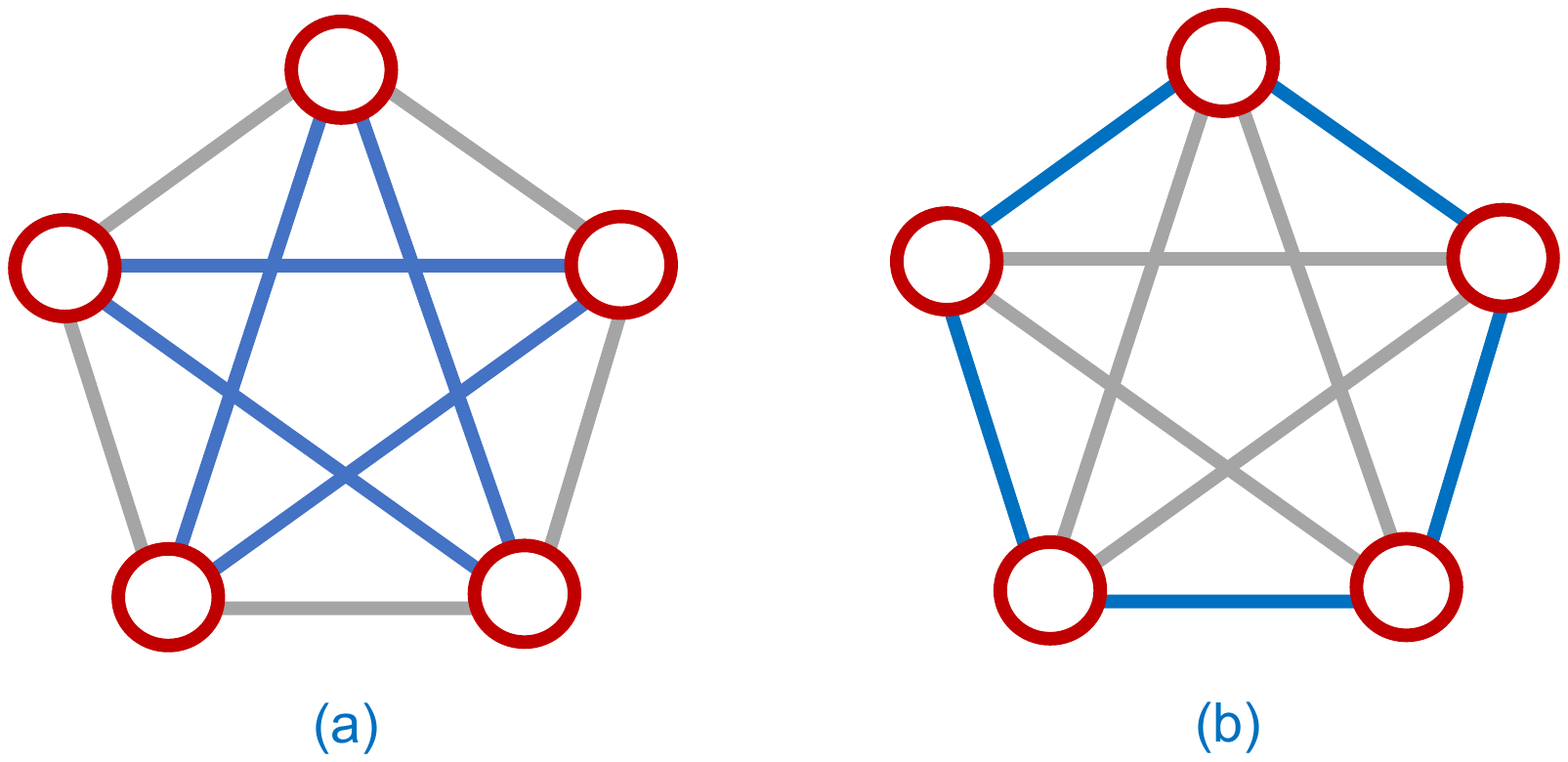}
\vspace*{6pt}
\caption{Social Distancing vs. Social Networking. The complement of network $G = (V,E)$ is a network $\bar G = (\bar V, \bar E)$, with $\bar V = V$ and $\bar E = \{(i,j):\ (i,j)\not \in E\}$. (a) Network $G$ with links colored in blue. (b) Network $\bar G$ with links colored in blue. The links in gray are those in the complement of the corresponding network. A social distancing game on $G$ is a social networking game on $\bar G$, and vice versa.}
\label{Fig_1}
\end{figure}

While social living is considered to be an indispensable part of human life in today's ever-connected world, social distancing has recently received much public attention on its importance since the outbreak of the coronavirus pandemic \cite{Chakradhar2020,Kissler2020,Long2020,Mann2020,Miller2020,Thunstrome2020}. In fact, social distancing has long been practiced in nature among solitary animals, who always try to distance themselves from others, sometimes for more efficient foraging, sometimes for self-protection from infightings, and sometimes for avoiding disease infections as well \cite{Krause2002,Alcock2009,Combs2020}. In human societies, social distancing has been recommended by health experts as a special yet effective measure for mitigating the spread of diseases during epidemic or pandemic \cite{Hatchett2007,Roth2011,Fenichel2013,Ahmed2018}. It can be as simple as just to keep distances from each other, or be more cooperative to follow certain rules such as avoiding large gatherings, canceling events, and closing schools, or to shut down places such as movie theaters, shopping centers, and commercial areas \cite{Faherty2019,Wilder2020}.    

Much work has been done on modeling social distancing in the past \cite{Eubank2004,Del2005,Glass2006,Meyers2007,Caley2008,Kelso2009,Funk2010,Reluga2010,Chen2011,Valdez2012,Bhattacharyya2019} and especially in this year since the outbreak of the coronavirus pandemic \cite{Eubank2020,Ferguson2020,Hellewell2020,Kiesha2020,McCombs2020,Sanche2020}, most focusing on epidemiological models and predictions, with social distancing as a behavioral factor affecting the spread of the diseases. The impacts of distancing behaviors on contact rates, transmission rates, and hence the infectious rates have been investigated through statistical estimation, dynamic simulation, and model prediction. Our work is focused more on planning of social distancing for a population on how to separate, what to avoid, and where to stay to minimize social contacts and mitigate the spread of the diseases. More specifically, we consider a social distancing problem for how the individuals of a given population, when in a world with a network of social sites, find some sites of the network to visit or stay while avoiding or closing down some others so that the social contacts among the individuals across the network can be minimized. This can be critical as a personal decision to make, and perhaps more so as a public health policy to implement. This problem can be modeled as a social distancing game as described above, where every individual tries to find some network sites to visit or stay so that he/she can minimize all his/her social contacts. In the end, an optimal strategy can be found for everyone, when the game reaches an equilibrium. We show that a large class of equilibrium strategies can be obtained by selecting a set of network sites that forms a so-called maximal $r$-regular subnetwork. A regular subnetwork is a subnetwork of all nodes of equal degree. An $r$-regular subnetwork is a subnetwork of all nodes of degree $r$ \cite{Bondy2008,Newman2018} (see Figure~\ref{Fig_2}). The latter includes many well studied network types, such as the maximal independent set ($r=0$), the maximal strong matching ($r=1$), the maximal set of independent cycles ($r=2$), etc. They are easy to identify or construct, and can be completely disconnected (with $r = 0$) for the most strict isolation, or allow certain degree of connectivities (with $r > 0$) for more flexible distancing. We derive the equilibrium conditions of these strategies, and analyze their rigidity and flexibility on different types of $r$-regular subnetworks. We also extend our model to weighted networks, when different contact values are assigned to different network sites.

Our work do share some important features with other previous studies. For example, Reluga and co-workers \cite{Reluga2010,Reluga2011,Reluga2013,Bhattacharyya2019} have also applied a game dynamic approach to the study of social distancing. Their game involves the investment on social distancing and its effect on the dynamics of SIR populations. Our game is instead for choosing locations for social distancing in a given network of social sites. Eubank et al. \cite{Eubank2004,Barrett2004,Eubank2005,Eubank2006,Barrett2008} have started investigating network based models in early 2000, and analyzed and simulated large social networks and their impacts on epidemics. Meyers et al. \cite{Meyers2006,Meyers2007,Volz2007,Volz2009,Bansal2010} have led multiple efforts on contact network based epidemic modeling, mostly concerning networks of person-to-person contacts. The networks studied in our work can be considered as a special type of contact networks, where the nodes are locations, and the contacts are made through access to the connected locations.   

\begin{figure}[h]
\centering
\includegraphics[width=0.20\textwidth]{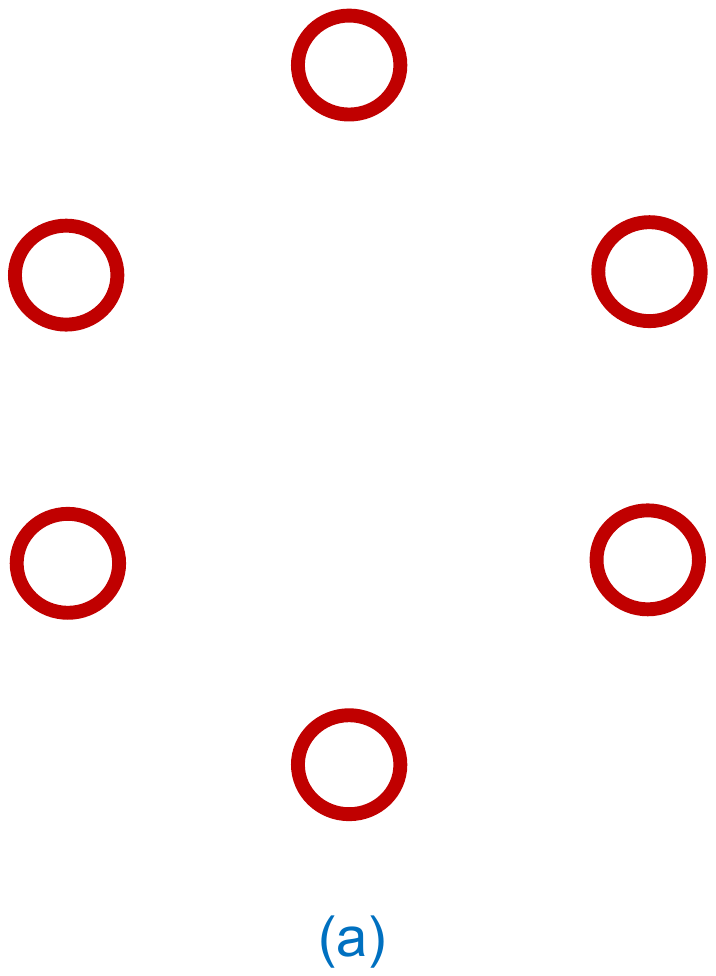}\hspace{12ex}
\includegraphics[width=0.20\textwidth]{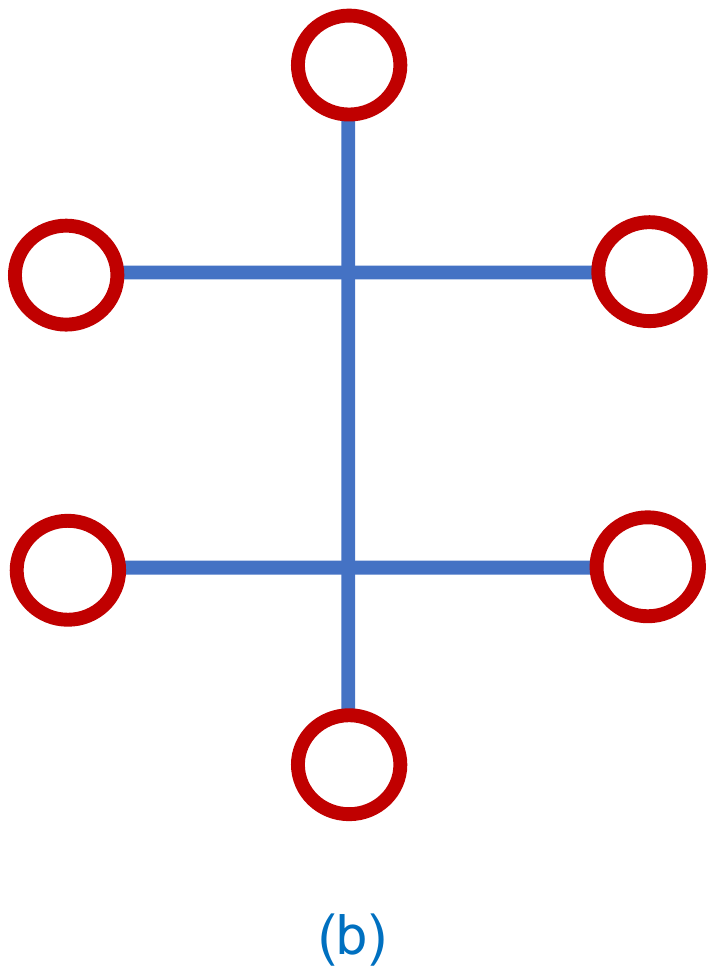}\\
\vspace*{12pt}
\includegraphics[width=0.20\textwidth]{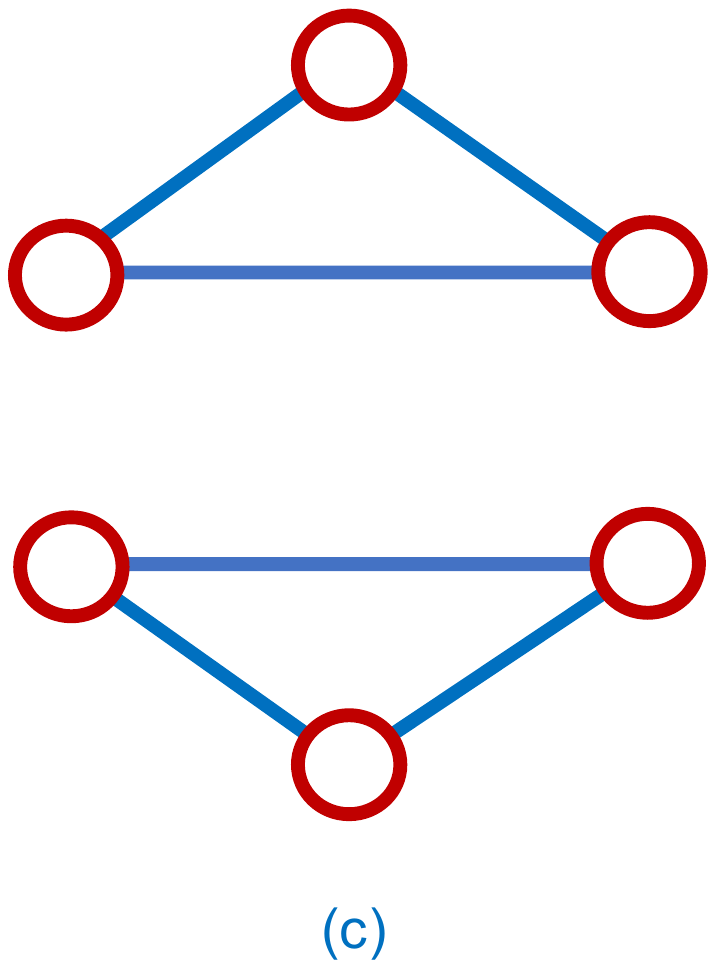}\hspace{12ex}
\includegraphics[width=0.20\textwidth]{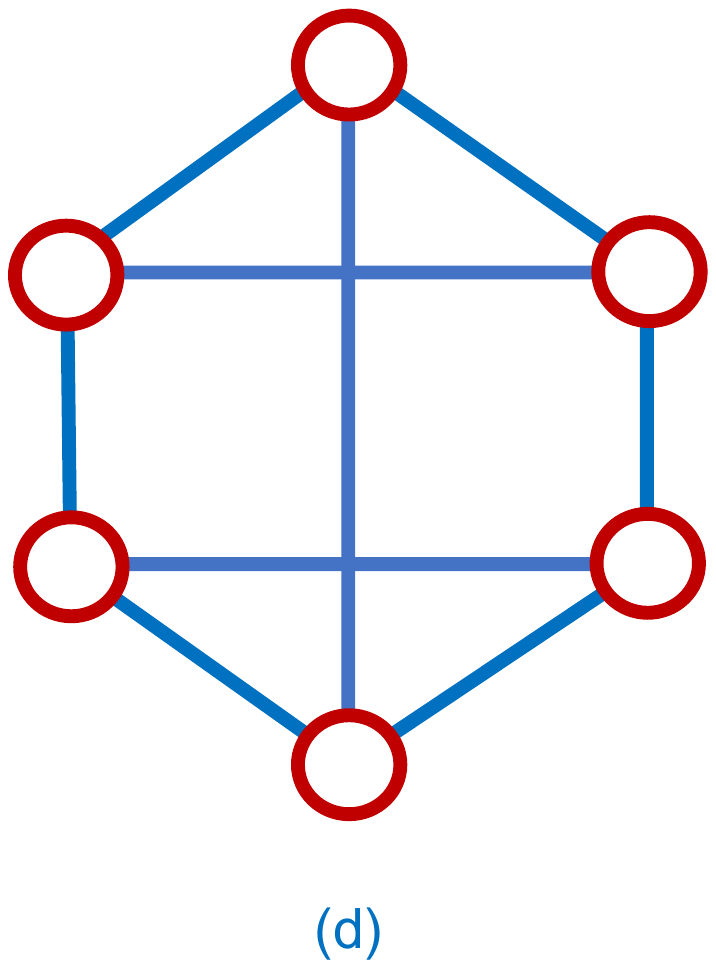}
\vspace*{12pt}
\caption{Regular Networks. A regular network is a network of all nodes of equal degree. An $r$-regular network is a network of all nodes of degree $r$. (a) A $0$-regular network. (b) A $1$-regular network. (c) A $2$-regular network. (d) A $3$-regular network.}
\label{Fig_2}
\end{figure}

\section{Optimal Distancing Strategies}
\label{optimal_distancing_strategies}

In the social distancing game on a network, every individual wants to find a network site to distance himself/herself from others. In an attempt, a site with a small number of connections would be a good choice. However, if everyone chooses this site, no one will benefit. Therefore, a strategy is needed to select a set of sites to visit or stay with only certain frequency at each. Let $x^*\in S$ be a strategy everyone would like to take, with $x_j^*$ being the frequency of visiting or staying at site $j$. In effect, the population will distribute on the network also as $x^*$, with $x_j^*$ being the probable fraction on site $j$. With the population distributed as such, the social contacts an individual at site $i$ can make with the population at site $j$ will be $A_{i,j}x_j^*$. If there is a link between $i$ and $j$, i.e., $A_{i,j} = 1$, $A_{i,j}x_j^* = x_j^*$; otherwise, $A_{i,j}x_j^* = 0$. In total, the social contacts this individual can make, if at site $i$,  will be $p_i(x^*) = \Sigma_j A_{i,j}x_j^*$. The competition among all the sites is then balanced if $p_i(x^*)$ are the same, say equal to $\lambda^*$, for all $i$ such that $x_i^* > 0$, i.e., for all the sites selected to visit or stay. In addition, $\lambda^*$ must be smaller than or equal to $p_i(x^*)$ for all $i$ such that $x_i^* = 0$, i.e., for all the sites not selected, for otherwise, by selecting such a site $i$ such that $p_i(x^*) < \lambda^*$, the individual will be able to reduce the social contacts further. When all these conditions are satisfied, every individual with a strategy $x^*$ minimizes all his/her social contacts to $\Sigma_i x_i^*p_i(x^*) = \lambda^*$, no more, no less, and an equilibrium is reached.       
 
Based on evolutionary game theory \cite{Weibull1995,Hofbauer1998}, we can prove that the above conditions are in fact necessary and sufficient for a strategy $x^*$ to be a Nash equilibrium of the social distancing game on a network. We state this property in the following theorem. 

\begin{theorem} 
\label{Thm.2.1}
Let $A$ be the adjacency matrix of network $G$. Then, a strategy $x^*\in S$ is a Nash equilibrium of the social distancing game on $G$ if and only if there is a scalar $\lambda^*$ such that $p_i(x^*) = \lambda^*$ for all $i$ such that $x_i^* > 0$ and $p_i(x^*) \ge \lambda^*$ for all $i$ such that $x_i^* = 0$, where $p_i(x^*) = \Sigma_j A_{i,j}x_j^*$.
\end{theorem}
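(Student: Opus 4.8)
The plan is to prove both directions by reducing the problem to the standard first-order (Karush--Kuhn--Tucker) characterization of a minimizer of the linear function $x \mapsto \pi(x,x^*) = x^T A x^*$ over the simplex $S$. Observe that $\pi(x,x^*) = x^T(Ax^*) = \sum_i x_i\, p_i(x^*)$, where $p_i(x^*) = \sum_j A_{i,j} x_j^*$ is the $i$-th component of the fixed vector $Ax^*$. So the equilibrium condition $\pi(x^*,x^*) \le \pi(x,x^*)$ for all $x \in S$ says precisely that $x^*$ minimizes the linear functional with coefficient vector $p(x^*)$ over $S$.

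First I would establish a simple lemma (or just argue inline, since it is elementary): a point $x^* \in S$ minimizes $\sum_i x_i c_i$ over the simplex $S$ if and only if there is a scalar $\lambda^*$ with $c_i = \lambda^*$ whenever $x_i^* > 0$ and $c_i \ge \lambda^*$ whenever $x_i^* = 0$; moreover in that case $\lambda^* = \min_i c_i = \sum_i x_i^* c_i$. The ``if'' direction is immediate: if the condition holds then $\sum_i x_i c_i \ge \lambda^* \sum_i x_i = \lambda^*$ for any $x\in S$, while $\sum_i x_i^* c_i = \lambda^*$, so $x^*$ is optimal. The ``only if'' direction: set $\lambda^* = \min_i c_i$; comparing $x^*$ with the vertex $e_k$ where $c_k = \lambda^*$ gives $\sum_i x_i^* c_i \le \lambda^*$, and since trivially $\sum_i x_i^* c_i \ge \lambda^* \sum_i x_i^* = \lambda^*$, we get $\sum_i x_i^* c_i = \lambda^*$; because each term $x_i^* c_i \ge x_i^* \lambda^*$, equality of the sums forces $x_i^*(c_i - \lambda^*) = 0$ for every $i$, i.e. $c_i = \lambda^*$ on the support of $x^*$, and $c_i \ge \lambda^*$ off the support by the choice of $\lambda^*$ as the minimum.

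Then I would simply apply this lemma with $c_i = p_i(x^*) = (Ax^*)_i$. The crucial point that makes the argument go through cleanly is that, although $p_i$ depends on $x^*$, once $x^*$ is fixed the objective $x\mapsto \pi(x,x^*)$ is genuinely linear in $x$, so no fixed-point subtlety arises in the equivalence itself — the self-referential nature of Nash equilibrium is already absorbed into the statement. Combining the two directions of the lemma with the definition $\pi(x^*,x^*)\le\pi(x,x^*)\ \forall x\in S$ gives exactly the claimed characterization, and as a bonus identifies $\lambda^* = \pi(x^*,x^*)$, the common value of social contacts at equilibrium, consistent with the heuristic discussion preceding the theorem.

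The main obstacle is essentially bookkeeping rather than depth: one must be careful that the simplex constraint is handled correctly (the relevant comparison strategies are the vertices $e_k$, and the KKT multiplier $\lambda^*$ is the one attached to the equality constraint $\sum_i x_i = 1$), and that the complementarity step — deducing $x_i^*(p_i(x^*) - \lambda^*) = 0$ from equality of two sums of termwise-ordered nonnegative quantities — is stated precisely. There is no analytic difficulty since everything is linear; the only thing to watch is not to invoke differentiability or interior-point arguments that would fail on the boundary of $S$, which is exactly why the elementary vertex-comparison argument above is the right tool.
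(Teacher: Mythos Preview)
Your proposal is correct and follows essentially the same approach as the paper: both directions hinge on comparing $x^*$ with the vertices $e_i$ to get $p_i(x^*)\ge\lambda^*$ for all $i$, and then using the complementarity/averaging argument $\sum_i x_i^*(p_i(x^*)-\lambda^*)=0$ to force equality on the support. The only cosmetic difference is that the paper sets $\lambda^*:=\pi(x^*,x^*)$ directly rather than $\lambda^*:=\min_i p_i(x^*)$, but the two choices coincide and the logic is identical.
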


\begin{proof}
($\Leftarrow$): Assume that there is a scalar $\lambda^*$ such that $p_i(x^*) = \lambda^*$ for all $i$ such that $x_i^* > 0$ and $p_i(x_i^*) \ge \lambda^*$ for all $i$ such that $x_i^* = 0$. Then, $\pi(x^*,x^*) = \Sigma_i x_i^*p_i(x^*) = \lambda^*$. Let $x\in S$ be any strategy. Since $p_i(x^*)\ge \lambda^*$ for all $i$, $\pi(x,x^*) = \Sigma_i x_ip_i(x^*) \ge \lambda^*$. Therefore, $x^*$ is a Nash equilibrium.

($\Rightarrow$): Assume that $x^*$ is a Nash equilibrium. Then, $\pi(x,x^*)\ge \pi(x^*,x^*)$ for all $x\in S$. Let $\lambda^* = \pi(x^*,x^*)$. Then, $\pi(e_i,x^*)\ge \lambda^*$ for all $i$, i.e., $p_i(x^*)\ge \lambda^*$ for all $i$. It follows that $p_i(x^*)\ge \lambda^*$ for all $i$ such that $x_i^* = 0$. If $p_i(x^*) > \lambda^*$ for some $i$ such that $x_i^* > 0$, then $\pi(x^*,x^*) = \Sigma_i x_i^*p_i(x^*) > \lambda^*$. Therefore, $p_i(x^*) = \lambda^*$ for all $i$ such that $x_i^* > 0$. 
\end{proof}

Let $V^*$ be the set of nodes in $G$ selected by a strategy $x^*$, i.e., $V^* = \{i\in V:\ x_i^* > 0\}$. Let $E^*$ be the set of links between the nodes in $V^*$, i.e., $E^* = \{(i,j)\in E:\ i,j\in V^*\}$. We call $G^* = (V^*,E^*)$ the subnetwork of $G$ supporting $x^*$. For an individual, an equilibrium strategy $x^*$ on $G^*$ then means to visit or stay only at the nodes of $G^*$, with a nonzero frequency $x_i^*$ at node $i$ of $G^*$, while for the population, it implies to spread only on the nodes of $G^*$, with a nonzero probable fraction $x_i^*$ on node $i$ of $G^*$. In either view, strategy $x^*$ with a choice of nodes in $G^*$ maximizes the social distances and minimizes the social contacts among all the individuals of the population, if a link between two nodes is counted as a contact while no link as a distance. 

The equilibrium strategy for the social distancing game on a network is not necessarily unique. Different equilibrium strategies may have different minimal social contacts. In general, the fewer connections in supporting subnetwork $G^*$, the smaller amount of social contacts at equilibrium. Therefore, an equilibrium strategy without connections in $G^*$ can be a first choice for distancing. Such a $G^*$ is called an independent set of $G$. It is easy to verify that the larger the independent set, the better the corresponding strategy, when a larger amount of social contacts is minimized. The largest possible independent set that is not contained in another independent set is called a maximal independent set. In fact, a distancing strategy on an independent set of $G$ is not necessarily a Nash equilibrium, but it always is if it is on a maximal independent set of $G$. 

\begin{theorem}
\label{Thm.2.2}
Let $x^*\in S$ be a strategy for the social distancing game on a network $G = (V,E)$. Let $G^* = (V^*,E^*)$ be the supporting subnetwork for $x^*$. Then, if $G^*$ is a maximal independent set of $G$, $x^*$ is a Nash equilibrium for the game, with $x_i^* = 1/k$ for all $i\in V^*$ and $x_i^* = 0$ for all $i\in V\backslash V^*$, where $k$ is the size of $V^*$.
\end{theorem}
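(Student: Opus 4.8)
The plan is to verify the two equilibrium conditions of Theorem~\ref{Thm.2.1} for the uniform strategy supported on $V^*$. The first thing I would record is the arithmetic consequence of independence: if $G^* = (V^*, E^*)$ is an independent set of $G$, then $A_{i,j} = 0$ for every pair of distinct $i, j \in V^*$, while $A_{i,i} = 1$ by the convention adopted for the distancing game, and $x_j^* = 0$ for every $j \in V \backslash V^*$. Hence for any $i \in V^*$ the sum defining $p_i(x^*) = \Sigma_j A_{i,j} x_j^*$ collapses to its diagonal term alone, so $p_i(x^*) = A_{i,i}\, x_i^* = x_i^*$.

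Combining this with the support condition of Theorem~\ref{Thm.2.1} already does most of the work. For $x^*$ to be a Nash equilibrium there must be a scalar $\lambda^*$ with $p_i(x^*) = \lambda^*$ for all $i$ with $x_i^* > 0$; by the computation above this forces $x_i^* = \lambda^*$ for every $i \in V^*$, i.e.\ $x^*$ is constant on $V^*$. Summing over the $k = |V^*|$ nodes of $V^*$ and using $\Sigma_i x_i^* = 1$ then pins down $\lambda^* = 1/k$ and $x_i^* = 1/k$, which is exactly the strategy in the statement; so this candidate is not only natural but forced among strategies supported on $V^*$.

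It remains to check the off-support condition $p_i(x^*) \ge \lambda^* = 1/k$ for every $i \in V \backslash V^*$, and this is the one place maximality is used. Since $V^*$ is a \emph{maximal} independent set, a node $i \notin V^*$ cannot be nonadjacent to all of $V^*$, for otherwise $V^* \cup \{i\}$ would be a strictly larger independent set. Thus $i$ has at least one neighbor in $V^*$, and $p_i(x^*) = \Sigma_{j \in V^*} A_{i,j} x_j^* = (1/k)\cdot |\{\, j \in V^* : (i,j) \in E \,\}| \ge 1/k$. Both conditions of Theorem~\ref{Thm.2.1} now hold with $\lambda^* = 1/k$, so $x^*$ is a Nash equilibrium.

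The step I expect to be the crux is this last appeal to maximality: without it, an independent set $G^*$ could leave some node $i$ with no neighbor in $V^*$, giving $p_i(x^*) = 0 < 1/k$ and hence an individual a profitable deviation to $i$ --- consistent with the remark before the theorem that a distancing strategy on a non-maximal independent set need not be an equilibrium. Everything else is routine bookkeeping with the diagonal convention $A_{i,i} = 1$ and the vanishing of the within-$V^*$ off-diagonal entries.
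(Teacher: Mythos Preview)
Your proof is correct and follows essentially the same route as the paper's: compute $p_i(x^*)=x_i^*$ on $V^*$ from the independence structure and the diagonal convention, deduce $x_i^*=\lambda^*=1/k$, and then use maximality to guarantee each $i\in V\backslash V^*$ has a neighbor in $V^*$, yielding $p_i(x^*)\ge 1/k$ so that Theorem~\ref{Thm.2.1} applies. The only cosmetic difference is that you write the off-support value explicitly as $(1/k)\cdot|\{j\in V^*: (i,j)\in E\}|$, whereas the paper bounds it below by a single neighbor's contribution $x_l^*$.
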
  

\begin{proof}
Assume without loss of generality that $G^* = (V^*,E^*)$ with $V^* = \{1,\ldots,k\}$. Then, $A_{i,j} = 0$ for all $i, j = 1,\ldots,k,\ i\neq j$ and  $A_{i,i} = 1$ for all $i = 1,\ldots,k$. Since $G^*$ supports $x^*$, $x_i^* > 0$ for all $i\in V^*$ and $x_i^* = 0$ for all $i\in V\backslash V^*$. Let $p_i(x^*) = \Sigma_j A_{i,j} x_j^* = \lambda^*$ for all $i\in V^*$. Then, $x_i^* = \lambda^*$ for all $i\in V^*$. Since $\Sigma_i x_i^* = 1$, $1 = k\lambda^*$ and $x_i^* = \lambda^* = 1/k$ for all $i\in V^*$. Since $G^*$ is a maximal independent set, for any $i\in V\backslash V^*$, there must be $l\in V^*$ such that $i$ and $l$ are connected, i.e., $(i,l)\in E$ and $A_{i,l} = 1$.  Therefore, for any $i\in V\backslash V^*$, $p_i(x^*) = \Sigma_j A_{i,j} x_j^* \ge x_l^* = \lambda^*$. By Theorem~\ref{Thm.2.1}, $x^*$ is a Nash equilibrium. 
\end{proof}

Note that if $G^*$ is an independent set of $G$ but not a maximal one, there will be $i\in V\backslash V^*$ not connected with any $j\in V^*$ and $p_i(x^*) = \Sigma_j A_{i,j} x_j^* = 0 < \lambda^*$. Then, $x^*$ cannot be a Nash equilibrium. Theorem~\ref{Thm.2.2} suggests that if there is a maximal independent set of nodes in the network, an optimal distancing strategy for an individual will be to choose only this set of nodes to visit or stay with an equal frequency $1/k$ for each node, where $k$ is the size of the set. As a result, the population will then be distributed only on this set of nodes with an equally probable fraction $1/k$ on each node. At equilibrium, every individual minimizes his/her total amount of social contacts to $\lambda^* = 1/k$. Therefore, the larger the maximal independent set, the better for minimizing the social contacts, and the strategy on a maximum independent set will be the best among all those supported by a maximal independent set.

Figure~\ref{Fig_3} shows an example network, representing a world of $10$ connected social sites. The inside $5$ sites, $\{1,2,3,4,5\}$, may be town centers, well connected. The outside $5$ sites, $\{6,7,8,9,10\}$, may be residential areas, each having access to some of the town centers. In this network, the sites $\{3,5,9\}$ form a maximal independent set. An equilibrium strategy $x^*$ can be reached on this set of sites, with $x_3^* = x_5^* = x_9^* = 1/3$ and all other elements $x_j^* = 0$, and a minimal social contact $\lambda^* = 1/3$. However, the sites $\{6,7,8,9,10\}$ form another maximal independent set. An equilibrium strategy $x^*$ can be reached on this set of sites, with $x_6^* = x_7^* = x_8^* = x_9^* = x_{10}^* = 1/5$ and all other elements $x_j^* = 0$, and a minimal social contact $\lambda^* = 1/5$. The latter is an even better strategy than the former.

\begin{figure}[h]
\centering
\includegraphics[width=0.4\textwidth]{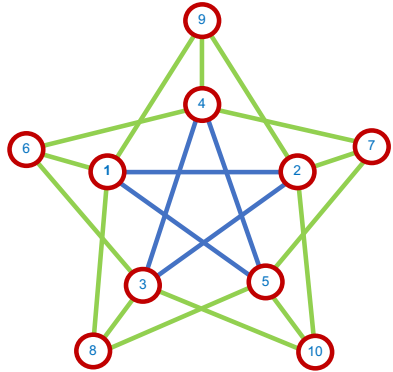}
\vspace*{12pt}
\caption{Social Distancing on Independent Sets. In the network shown in this figure, the network sites $\{3,5,9\}$ form a maximal independent set. An equilibrium strategy $x^*$ can be reached on this set of sites, with $x_3^* = x_5^* = x_9^* = 1/3$ and all other elements $x_j^* = 0$, and a minimal social contact $\lambda^* = 1/3$. However, the sites $\{6,7,8,9,10\}$ form another maximal independent set. An equilibrium strategy $x^*$ can be reached on this set of sites, with $x_6^* = x_7^* = x_8^* = x_9^* = x_{10}^* = 1/5$ and all other elements $x_j^* = 0$, and a minimal social contact $\lambda^* = 1/5$. The latter is an even better strategy than the former.}
\label{Fig_3}
\end{figure}

An equilibrium strategy may as well be supported by a subnetwork where there are some connections among the nodes. A general class of subnetworks that may be selected by an equilibrium strategy is the regular networks. In a regular network, all nodes have the same number of connections with other nodes. In other words, all the nodes of a regular network have the same degree. If the degree is $r$, the network is called an $r$-regular network. Thus, a maximal independent set of a given network is a $0$-regular network. A $1$-regular network is where each node is connected only to another one; a $2$-regular network is where each node is connected to two other nodes; and so on and so forth (see Figure~\ref{Fig_2}).  

A regular network can be connected or disconnected. In the latter case, it consists of a number of disconnected components, each being a regular network itself. For an $r$-regular subnetwork of $G$, the larger the subnetwork, the sparser the connectivity, and the better for social distancing. The largest possible $r$-regular subnetwork that is not a disconnected part of another $r$-regular subnetwork is called a maximal $r$-regular subnetwork. A distancing strategy on an $r$-regular subnetwork may not necessarily be an equilibrium strategy, but it always is if it is on a maximal $r$-regular subnetwork and if there is no less than $r+1$ connections to the subnetwork from an outside node. 

\begin{theorem}
\label{Thm.2.3}
Let $x^*\in S$ be a strategy for the social distancing game on a network $G = (V,E)$. Let $G^* = (V^*,E^*)$ be the supporting subnetwork for $x^*$. Then, if $G^*$ is a maximal $r$-regular subnetwork and if there is no less than $r+1$ links to $G^*$ from a node $i\in V\backslash V^*$, $x^*$ is a Nash equilibrium for the game, with $x_i^* = 1/k$ for all $i\in V^*$ and $x_i^* = 0$ for all $i\in V\backslash V^*$, where $k$ is the size of $V^*$.
\end{theorem}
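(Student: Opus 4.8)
The plan is to mirror the proof of Theorem~\ref{Thm.2.2}, which is exactly the $r=0$ instance, and to verify directly the two conditions of Theorem~\ref{Thm.2.1} for the explicit strategy $x^*$ defined by $x_i^* = 1/k$ for $i\in V^*$ and $x_i^* = 0$ for $i\in V\backslash V^*$. First I would check $x^*\in S$, which is immediate: the entries are nonnegative and sum to $k\cdot(1/k)=1$.

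The second step is to evaluate $p_i(x^*) = \Sigma_j A_{i,j}x_j^*$ at a node $i\in V^*$. Only nodes $j$ carrying positive mass, i.e.\ $j\in V^*$, contribute. The term $j=i$ contributes $A_{i,i}x_i^* = 1/k$ (recall $A_{i,i}=1$ in the distancing game), and the remaining contributions come from the neighbours of $i$ inside $V^*$. Since $G^*$ is the supporting subnetwork, $E^*$ contains every edge of $G$ with both endpoints in $V^*$, so these neighbours are precisely the neighbours of $i$ in the $r$-regular graph $G^*$, of which there are exactly $r$. Hence $p_i(x^*) = (r+1)/k$ for every $i\in V^*$, and I would set $\lambda^* := (r+1)/k$; the first equilibrium condition of Theorem~\ref{Thm.2.1} then holds.

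The third step handles a node $i\in V\backslash V^*$. Here $x_i^* = 0$, so the diagonal term disappears and $p_i(x^*) = \Sigma_{j\in V^*,\,(i,j)\in E}\, x_j^* = m_i/k$, where $m_i$ is the number of links from $i$ to $G^*$ (links from $i$ to nodes outside $V^*$ carry no mass and drop out). The hypothesis gives $m_i\ge r+1$, so $p_i(x^*)\ge (r+1)/k = \lambda^*$, which is the second condition of Theorem~\ref{Thm.2.1}. With both conditions verified, that theorem yields at once that $x^*$ is a Nash equilibrium.

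I do not anticipate a genuine obstacle, since the argument is a verification; the delicate points are purely bookkeeping --- that ``$r$-regular'' refers to degrees within $G^*$ rather than within $G$, that the self-term $A_{i,i}x_i^*$ is present for $i\in V^*$ but vanishes for $i\notin V^*$, and that $p_i(x^*)$ only ever sees mass sitting on $V^*$. I would also add a remark that maximality of $G^*$ is not logically needed for this implication --- it is the ``$\ge r+1$ links'' hypothesis that forces $p_i(x^*)\ge\lambda^*$ --- but it is the natural assumption, both because it makes that hypothesis plausible (as in the independent-set case, where every outside node must connect back to the set) and because a larger $G^*$ gives a smaller minimal contact $\lambda^* = (r+1)/k$. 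Finally, in contrast to the $r=0$ case, I would point out that the uniform strategy need not be the \emph{only} equilibrium supported by $G^*$: when $I + A(G^*)$ is singular, the system $p_i(x^*)=\lambda^*$ on $V^*$ has a whole affine family of solutions, which is precisely the flexibility to be analysed afterwards; the theorem asserts only that the uniform choice is one such equilibrium.
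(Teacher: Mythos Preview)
Your proposal is correct and follows essentially the same route as the paper: verify the two conditions of Theorem~\ref{Thm.2.1} for the uniform strategy on $V^*$, using $r$-regularity to get $p_i(x^*)=(r+1)/k$ on $V^*$ and the ``$\ge r+1$ links'' hypothesis to get $p_i(x^*)\ge (r+1)/k$ off $V^*$. The only cosmetic difference is that the paper phrases the first step as setting up the system $\Sigma_{j\in V^*}A_{i,j}x_j^*=\lambda^*$, summing over $i\in V^*$ to extract $\lambda^*=(r+1)/k$, and then ``solving'' for $x_i^*=1/k$, whereas you plug in $x_i^*=1/k$ and compute $p_i$ directly; your observation that the system can be underdetermined (and hence that the uniform strategy is merely \emph{one} solution) is exactly right and in fact sharpens the paper's presentation.
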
  

\begin{proof}
Assume without loss of generality that $V^* = \{1,\ldots,k\}$. Then, $A^* = \{A_{i,j},\ i,j = 1,\ldots,k\}$ is the adjacency matrix of $G^*$, and each of its rows and columns has exactly $r+1$ elements equal to $1$. Since $G^*$ supports $x^*$, $x_i^* > 0$ for all $i\in V^*$ and $x_i^* = 0$ for all $i\in V\backslash V^*$. It follows that $p_i(x^*) = \Sigma_j A_{i,j} x_j^* = \Sigma_{j\in V^*} A_{i,j} x_j^*$ for all $i\in V^*$. Let $\Sigma_{j\in V^*} A_{i,j} x_j^* = \lambda^*$ for all $i\in V^*$.  By adding all these equations, we obtain $r + 1 = k\lambda^*$, and $\lambda^* = (r+1) / k$. By solving the equations for $x_i^*$, for all $i\in V^*$, we also have $x_i^* = 1/k$ for all $i\in V^*$. Since there is no less than $r+1$ links to $G^*$ from a node $i\in V\backslash V^*$, $p_i(x^*) = \Sigma_j A_{i,j} x_j^* = \Sigma_{j\in V^*} A_{i,j} x_j^* \ge (r+1)/k = \lambda^*$ for any $i\in V\backslash V^*$. By Theorem~\ref{Thm.2.1}, $x^*$ is a Nash equilibrium. 
\end{proof}

Theorem~\ref{Thm.2.3} suggests that if there is a maximal $r$-regular subnetwork in the network, an optimal distancing strategy for an individual will be to choose only this subnetwork to visit or stay with an equal frequency $1/k$ for each node of the subnetwork, where $k$ is the number of the nodes in the subnetwork. As a result, the population will then be distributed only on this subnetwork with an equally probable fraction $1/k$ on each node of the subnetwork. At equilibrium, every individual minimizes his/her total amount of social contacts to $\lambda^* = (r+1)/k$. Therefore, the larger the regular subnetwork or the smaller the $r$ value, the better for minimizing the social contacts.

Of particular interest among maximal regular subnetworks are those with a large number of disconnected components. For example, a maximal $1$-regular subnetwork is just a set of disconnected $1$-regular subnetworks of size $2$; a maximal $2$-regular subnetwork can be a large number of disconnected $2$-regular subnetworks of size $3$; etc. An equilibrium strategy on such a subnetwork means that the population can spread on a large number of independent locations, where at each location, there is a group of social sites with a low degree of connectivity. Such a strategy allows certain degree of local contacts yet keeps maximal possible distances among individuals, and can therefore be considered as a more flexible strategy for social distancing than those on maximal independent sets.      

Figure~\ref{Fig_4} shows a network similar to the one in Figure~\ref{Fig_3} except for some changes in contact connections. In this network, the sites $\{1,2,3,4,5\}$ form a $2$-regular subnetwork. However, a strategy on this subnetwork cannot be an equilibrium strategy because there are nodes outside the subnetwork, say node $6$, with less than $3$ connections to the nodes in the subnetwork. For the same reason, a strategy on the $2$-regular subnetwork formed by the sites $\{6,7,8,9,10\}$ cannot be an equilibrium strategy, either. On the other hand, the whole network itself is a $3$-regular network. An equilibrium strategy $x^*$ can be reached on the whole network with $x_i^* = 1/10$ for all $i = 1,\ldots,10$, and a minimal social contact $\lambda^* = 2/5$.

In the network in Figure~\ref{Fig_4}, there are also equilibrium strategies supported by other regular subnetworks. For example, the sites $\{1,2,4,8,9,10\}$ form a $1$-regular subnetwork of size $6$. An equilibrium strategy $x^*$ can be reached on this subnetwork with $x_1^*=x_2^*=x_4^*=x_8^*=x_9^*=x_{10}^*=1/6$ and all other elements $x_j^*=0$, and a minimal social contact $\lambda^* = 1/3$. There are total $5$ such equilibrium strategies on this network. Furthermore, the sites $\{3,5,6,7\}$ form a maximal independent set of sites or in other words, a $0$-regular subnetwork. An equilibrium strategy $x^*$ can be reached on this set of sites with $x_3^*=x_5^*=x_6^*=x_7^*=1/4$ and all other elements $x_j^*=0$, and a minimal social contact $\lambda^*=1/4$. There are total $4$ such equilibrium strategies on this network. In terms of minimal social contact, among these strategies, the one on the maximal independent set, i.e., on a $0$-regular subnetwork, is the best, because its social contact is the lowest. However, it is also the most strict strategy, for it does not allow any contacts across different sites. The strategy on the whole network, i.e., on a $3$-regular network, has the highest social contact, but it is the most flexible strategy, allowing certain degree of contacts across different sites. 

\begin{figure}[h]
\centering
\includegraphics[width=0.4\textwidth]{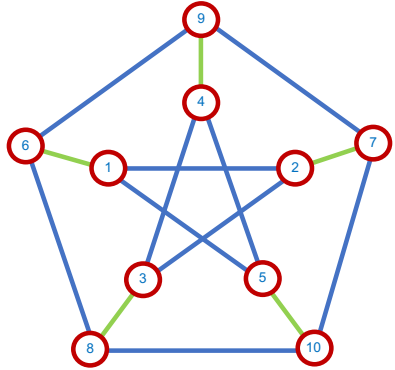}
\vspace*{12pt}
\caption{Social Distancing on Regular Subnetworks. In the network shown in this figure, the network sites $\{1,2,3,4,5\}$ form a $2$-regular subnetwork. However, a strategy on this subnetwork cannot be an equilibrium strategy because there are nodes outside of the subnetwork, say node $6$, with less than $3$ connections to the nodes in the subnetwork. For the same reason, a strategy on the $2$-regular subnetwork formed by the sites $\{6,7,8,9,10\}$ cannot be an equilibrium strategy either. On the other hand, the network itself is a $3$-regular network. An equilibrium strategy $x^*$ can be reached on the whole network with $x_i^* = 1/10$ for $i = 1,\ldots,10$ and a minimal social contact $\lambda^* = 2/5$.}
\label{Fig_4}
\end{figure}   

\section{Rigidity and Flexibility of Social Distancing}
\label{rigidity_and_flexibility_of_social_distancing}

Once an equilibrium is reached, i.e., a distancing rule is established, can we make some small changes or adjustments on the strategy? This is a question of legitimate concern in practice, for we may not be able to commit to a rule all the time. The answer to the question is that it depends on whether the social contact is changed: if the contact is increased with small changes in the strategy, the rule needs to be enforced, and we say the strategy is rigid; if the contact is not changed, the rule is not strict, and we say the strategy is flexible; if the contact is decreased, the rule is broken, and we say the strategy is fragile. In order to further address this issue, we give a formal definition for rigidity, flexibility, and fragility for a distancing strategy in the following. We then analyze the strategies on maximal $r$-regular subnetworks with these properties. 

Let $x^*\in S$ be an equilibrium strategy for the social distancing game on network $G$. Let $G^*$ be the subnetwork supporting $x^*$. Let $S^* = \{x\in S:\ x_i > 0\ {\rm for\ all}\ i\ {\rm such\ that}\ x_i^* > 0\}$. Then, $S^*$ is the set of strategies all supported by $G^*$. We define rigidity, flexibility, and fragility of $x^*$ only for small changes of $x^*$ in $S^*$. Let $D^* = \{d\in R^n:\ d = x - x^*,\ x\in S^*,\ x\neq x^*\}$. Then, $d\in D^*$ is a direction such that $x^*+\epsilon d$, a small change of $x^*$ along $d$, remains in $S^*$ for all $\epsilon > 0$ sufficiently small.

\begin{mydefinition}
\label{Def.3.1}
Let $x^*\in S$ be an equilibrium strategy for the social distancing game on network $G$ supported by subnetwork $G^*$. Then, $x^*$ is said to be strongly rigid on $G^*$ if for any $d\in D^*$, $\pi(x^*+\epsilon d,x^*+\epsilon d) > \pi(x^*,x^*)$ for all $\epsilon > 0$ sufficiently small.
\end{mydefinition}

\begin{mydefinition}
\label{Def.3.2}
Let $x^*\in S$ be an equilibrium strategy for the social distancing game on network $G$ supported by subnetwork $G^*$. Then, $x^*$ is said to be weakly rigid on $G^*$ if for any $d\in D^*$, $\pi(x^*+\epsilon d,x^*+\epsilon d) \ge \pi(x^*,x^*)$ for all $\epsilon > 0$ sufficiently small. 
\end{mydefinition}

\begin{mydefinition}
\label{Def.3.3}
Let $x^*\in S$ be an equilibrium strategy for the social distancing game on network $G$ supported by subnetwork $G^*$. Then, $x^*$ is said to be flexible on $G^*$ if there is $d\in D^*$ such that $\pi(x^*+\epsilon d,x^*+\epsilon d) = \pi(x^*,x^*)$ for all $\epsilon > 0$ sufficiently small.
\end{mydefinition}

\begin{mydefinition}
\label{Def.3.4}
Let $x^*\in S$ be an equilibrium strategy for the social distancing game on network $G$ supported by subnetwork $G^*$. Then, $x^*$ is said to be fragile on $G^*$ if there is $d\in D^*$ such that $\pi(x^*+\epsilon d,x^*+\epsilon d) < \pi(x^*,x^*)$ for all $\epsilon > 0$ sufficiently small.
\end{mydefinition}

Recall that $\pi(x,x)$ is the social contact of the population of strategy $x$. Therefore, the definition of rigidity basically implies that an equilibrium strategy $x^*$ is strongly (or weakly) rigid if it is a strict (or non-strict) local minimizer of the social contact $\pi(x,x)$ in $S^*$. In this sense, the strong (or weak) rigidity can be considered as a weaker version of strong (or weak) evolutionary stability for general evolutionary games \cite{Weibull1995,Hofbauer1998}. Nonetheless, we call this property rigidity instead of stability because the corresponding strategy is considered as a rigid distancing rule in the context of social distancing. Note that based on the above definitions, if an equilibrium strategy is weakly rigid, it must be flexible; however, if it is flexible, it may not necessarily be weakly rigid; It could be fragile.

We now investigate these properties for strategies on maximal $r$-regular subnetworks. First, it is easy to justify that the equilibrium strategies on a maximal independent set are all strongly rigid: Small changes on such strategies always increase the social contacts. In other words, there will be penalties for any changes on such strategies. On the other hand, the equilibrium strategies on any maximal $r$-regular subnetworks with $r > 0$ are always flexible: In this case, we can find some connected nodes in the subnetworks. By increasing the frequencies on some of these nodes while decreasing on others, we can make some small changes on the strategy without changing the original amount of social contacts. Of these flexible strategies, some may be weakly rigid, but others may be fragile.

\begin{theorem}
\label{Thm.3.1}
Let $x^*\in S$ be an equilibrium strategy for the social distancing game on network $G = (V,E)$. If $x^*$ is supported by a maximal independent set $G^* = (V^*,E^*)$, then $x^*$ is strongly rigid on $G^*$.
\end{theorem}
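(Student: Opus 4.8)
The plan is to reduce the statement to the observation that, on the face of the simplex $S$ consisting of strategies supported by $V^*$, the payoff $\pi(x,x)$ is simply the squared Euclidean length of $x$, a strictly convex function whose constrained minimizer is precisely the uniform point $x^*$; strict convexity then forces a strict increase along every admissible perturbation direction.

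First I would normalize the setup: assume without loss of generality that $V^* = \{1,\ldots,k\}$, and apply Theorem~\ref{Thm.2.2} to record that $x_i^* = 1/k$ for $i\in V^*$, $x_i^* = 0$ otherwise, and $\pi(x^*,x^*) = \lambda^* = 1/k$. I would then unpack $D^*$: any $d\in D^*$ has the form $d = x - x^*$ with $x\in S^*$ supported by $G^*$, so $d_i = 0$ for every $i\in V\setminus V^*$, $\sum_{i\in V^*} d_i = 0$, and $d\neq 0$.

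The key structural input is that the principal submatrix $A^* = (A_{i,j})_{i,j\in V^*}$ is the $k\times k$ identity matrix: $A_{i,j} = 0$ for $i\neq j$ in $V^*$ by independence of $V^*$, and $A_{i,i} = 1$ by the convention adopted for the distancing game. Combining this with symmetry of $A$, I would expand, for $\epsilon > 0$,
\[
\pi(x^* + \epsilon d,\ x^* + \epsilon d) \;=\; \pi(x^*,x^*) \;+\; 2\epsilon\, d^{T} A x^* \;+\; \epsilon^{2}\, d^{T} A d .
\]
Since $d$ vanishes off $V^*$ and $(Ax^*)_i = \lambda^* = 1/k$ for all $i\in V^*$, the linear term is $d^{T}Ax^* = \lambda^*\sum_{i\in V^*} d_i = 0$; and since $A^*$ is the identity, the quadratic term is $d^{T}Ad = \sum_{i\in V^*} d_i^{2} = \|d\|^{2} > 0$. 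Hence $\pi(x^* + \epsilon d, x^* + \epsilon d) = \pi(x^*,x^*) + \epsilon^{2}\|d\|^{2} > \pi(x^*,x^*)$ for every $\epsilon > 0$, and in particular for all sufficiently small $\epsilon > 0$, which is precisely the assertion that $x^*$ is strongly rigid on $G^*$.

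I do not anticipate a real obstacle; the argument is essentially algebra once one notices that $A$ restricts to the identity on $V^*$. The only point needing care is the scope of the perturbations: the proof relies on strategies in $S^*$ placing no mass outside $V^*$, so that $d$ vanishes there and the linear term disappears. Maximality of the independent set enters only through Theorem~\ref{Thm.2.2}, to ensure $x^*$ is an equilibrium of the required uniform form; it is not used again in the rigidity estimate itself.
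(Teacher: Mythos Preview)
Your proposal is correct and follows essentially the same route as the paper: expand $\pi(x^*+\epsilon d,x^*+\epsilon d)$ as a quadratic in $\epsilon$, use the equilibrium condition $(Ax^*)_i=\lambda^*$ on $V^*$ together with $\sum_{i\in V^*}d_i=0$ to kill the linear term, and use that the principal submatrix of $A$ on $V^*$ is the identity to make the quadratic term $\sum_{i\in V^*}d_i^2>0$. The only cosmetic difference is that you invoke Theorem~\ref{Thm.2.2} to pin down $x^*$ explicitly, whereas the paper appeals directly to the equilibrium condition $(Ax^*)_i=\lambda^*$ from Theorem~\ref{Thm.2.1}; neither the explicit value $x_i^*=1/k$ nor maximality is actually needed in the rigidity estimate, as you correctly observe.
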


\begin{proof}
Let $x^*+\epsilon d$ be a change from $x^*$ for any $d\in D^*$ and $\epsilon>0$ sufficiently small. Then, 
\begin{eqnarray*}
& \pi(x^*+\epsilon d, x^*+\epsilon d) = (x^*+\epsilon d)^TA(x^*+\epsilon d) & \\
& = x^{*T}Ax^* + 2\epsilon d^TAx^* + \epsilon^2d^TAd. &
\end{eqnarray*}
Note that $d = x - x^*$ for some $x\in S^*$, $x\neq x^*$. Therefore, $d\neq 0$, but $d_i = 0$ for $i\in V\backslash V^*$, and $\Sigma_{i\in V^*} d_i = 0$. Note also that $(Ax^*)_i = \lambda^*$ for all $i\in V^*$ and therefore, $d^TAx^* = \Sigma_{i\in V^*} d_i (Ax^*)_i = 0$. In addition, since $G^*$ is a maximal independent set, $\{A_{i,j},\ i, j \in V^*\}$, the adjacency matrix of $G^*$, is an identity matrix. Therefore, $d^TAd = \Sigma_{i,j\in V^*} d_iA_{i,j}d_j = \Sigma_{i\in V^*} d_i^2 > 0$. It follows that 
\begin{eqnarray*}
& \pi(x^*+\epsilon d, x^*+\epsilon d) = x^{*T}Ax^* + \epsilon^2 d^TAd & \\
& > x^{*T}Ax^* = \pi(x^*,x^*). &
\end{eqnarray*}
By Definition~\ref{Def.3.1}, $x^*$ is strongly rigid on $G^*$.
\end{proof}

\begin{theorem}
\label{Thm.3.2}
Let $x^*\in S$ be an equilibrium strategy for the social distancing game on network $G = (V,E)$. If $x^*$ is supported by a maximal $r$-regular subnetwork $G^* = (V^*,E^*)$ with $r > 0$, then $x^*$ is flexible on $G^*$.
\end{theorem}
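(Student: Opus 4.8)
The plan is to reduce flexibility to the existence of a single null direction of the quadratic form $d \mapsto d^T A d$ on the support, and then to exhibit such a direction using any link of $G^*$. First I would expand, exactly as in the proof of Theorem~\ref{Thm.3.1},
\begin{eqnarray*}
& \pi(x^*+\epsilon d, x^*+\epsilon d) = x^{*T}Ax^* + 2\epsilon\, d^TAx^* + \epsilon^2 d^TAd &
\end{eqnarray*}
for $d\in D^*$ and $\epsilon>0$ sufficiently small. Any such $d$ satisfies $d_i=0$ for $i\in V\backslash V^*$, $d\neq 0$, and $\Sigma_{i\in V^*}d_i=0$. Because $G^*$ is a maximal $r$-regular supporting subnetwork, Theorem~\ref{Thm.2.3} gives $x_i^*=1/k$ on $V^*$ and $(Ax^*)_i=\lambda^*=(r+1)/k$ for every $i\in V^*$; hence $d^TAx^* = \Sigma_{i\in V^*} d_i (Ax^*)_i = \lambda^*\Sigma_{i\in V^*}d_i = 0$, and the expansion collapses to $\pi(x^*+\epsilon d,x^*+\epsilon d) = \pi(x^*,x^*) + \epsilon^2\, d^TAd$. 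So it suffices to produce one nonzero $d\in D^*$ with $d^TAd=0$.

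For the construction I would use that $r>0$, so $G^*$ contains at least one link, say $(i,j)\in E^*$ with $i,j\in V^*$. Take $d$ to be a sufficiently small positive multiple of $e_i-e_j$. Then $d\neq 0$, $d$ is supported on $V^*$, and $\Sigma_{i\in V^*}d_i=0$; moreover, since $x_i^*=x_j^*=1/k>0$ and every other coordinate of $x^*$ on $V^*$ is unchanged and positive, $x^*+\epsilon d$ stays in $S^*$ for all small $\epsilon>0$, so $d\in D^*$. The only entries of $A$ surviving in $d^TAd$ are $A_{i,i},A_{i,j},A_{j,i},A_{j,j}$, and each equals $1$ — the diagonal ones by the convention $A_{i,i}=1$ for the distancing game, the off-diagonal ones because $(i,j)\in E$. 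Hence $d^TAd$ is proportional to $1-1-1+1=0$, so $\pi(x^*+\epsilon d,x^*+\epsilon d)=\pi(x^*,x^*)$ for all small $\epsilon>0$, and by Definition~\ref{Def.3.3}, $x^*$ is flexible on $G^*$.

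I do not expect a serious obstacle here. The one point requiring care is verifying that the chosen direction genuinely lies in $D^*$ — i.e., that the perturbed strategy remains strictly positive on the \emph{entire} support $V^*$ — which is handled by scaling $e_i-e_j$ down and invoking that $x^*$ is the uniform distribution on $V^*$ supplied by Theorem~\ref{Thm.2.3}. Conceptually, the content is simply that along the difference of two nodes joined by a link the ``identity'' part of the restricted matrix (from $A_{i,i}=1$) is exactly cancelled by the adjacency part (from $A_{i,j}=1$), yielding a true zero of the quadratic form rather than the strictly positive value $\Sigma_{i\in V^*} d_i^2$ obtained in the independent-set case of Theorem~\ref{Thm.3.1}.
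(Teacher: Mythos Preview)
Your argument is correct and essentially identical to the paper's: both pick an edge $(i,j)\in E^*$, set $d$ proportional to $e_i-e_j$, and verify that the linear term vanishes by the equilibrium condition while the quadratic term vanishes because $A_{i,i}=A_{j,j}=A_{i,j}=1$. One minor point: your appeal to Theorem~\ref{Thm.2.3} is unnecessary and technically its extra ``$r{+}1$ links'' hypothesis is not assumed in Theorem~\ref{Thm.3.2}; the paper instead relies only on Theorem~\ref{Thm.2.1} to obtain $(Ax^*)_k=\lambda^*$ for $k\in V^*$, and $x_i^*,x_j^*>0$ holds simply because $i,j$ lie in the support.
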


\begin{proof}
Since $G^*$ is an $r$-regular subnetwork with $r > 0$, there must be $i,j\in V^*$ such that $(i,j)\in E^*$ and $A_{i,j} = 1$. Let $d$ be a vector such that $d_i = - d_j = \delta$ with $0 < \delta < \min \{1-x_i^*,x_j^*\}$ and $d_l = 0$ for all $l\neq i,j$. Then, $d\in D^*$. Let $x^*+\epsilon d$ be a change from $x^*$ along $d$ for $\epsilon>0$ sufficiently small. Then, 
\begin{eqnarray*}
& \pi(x^*+\epsilon d, x^*+\epsilon d) = (x^*+\epsilon d)^TA(x^*+\epsilon d) & \\
& = x^{*T}Ax^* + 2\epsilon d^TAx^* + \epsilon^2 d^TAd. &
\end{eqnarray*}
Since $(Ax^*)_k = \lambda^*$ for all $k\in V^*$, $d^TAx^* = (d_i + d_j)\lambda^* = 0$. Also, $d^TAd = A_{i,i}d_i^2 + 2A_{i,j}d_id_j + A_{j,j}d_j^2 = 0$. It follows that 
\begin{eqnarray*}
& \pi(x^*+\epsilon d, x^*+\epsilon d) = (x^*+\epsilon d)^TA(x^*+\epsilon d) & \\
& = x^{*T}Ax^* = \pi(x^*,x^*). &
\end{eqnarray*}
By Definition~\ref{Def.3.3}, $x^*$ is flexible on $G^*$.
\end{proof}

Based on Theorem~\ref{Thm.3.1} and \ref{Thm.3.2}, for the example distancing problem shown in Figure~\ref{Fig_4}, the equilibrium strategy on a maximal independent set such as $\{3,5,6,7\}$ is strongly rigid, and the one on a maximal $1$-regular subnetwork such as $\{1,2,4,8,9,10\}$ is flexible. The equilibrium strategy on the whole network, which is a $3$-regular network, is also flexible. Therefore, as described in the proof for Theorem~\ref{Thm.3.2}, for this strategy, the frequencies can be exchanged between any two connected nodes, say node 1 and 2, by moving a small amount from one node to another. For example, by moving $1/20$ from node 1 to node 2, the frequency on node 1 decreases to $1/20$ from $1/10$ while on node 2 increases to $3/20$ from $1/10$, yet the minimal social contact for every individual remains to be $2/5$. 

Note that an $r$-regular network must have at least $r+1$ nodes. We call an $r$-regular network of size $r+1$ a minimal $r$-regular network. A minimal $r$-regular network must be a complete network, i.e., each node is connected with all others in the network. It turns out that an equilibrium strategy for the social distancing game on a network is always weakly rigid if it is supported by a maximal $r$-regular subnetwork whose components are all minimal $r$-regular subnetworks with $r > 0$:

\begin{theorem}
\label{Thm.3.3}
Let $x^*\in S$ be an equilibrium strategy for the social distancing game on network $G = (V,E)$. Then, $x^*$ is weakly rigid if it is supported by a maximal $r$-regular subnetwork $G^* = (V^*,E^*)$ whose components are all minimal $r$-regular subnetworks with $r > 0$.
\end{theorem}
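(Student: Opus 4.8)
The plan is to follow the same expansion of $\pi$ used in the proofs of Theorems~\ref{Thm.3.1} and \ref{Thm.3.2}, and to reduce the statement to showing that the quadratic form $d^{T}Ad$ is nonnegative on the admissible directions. First I would fix notation: assume $V^{*}=\{1,\ldots,k\}$, and let $C_{1},\ldots,C_{m}$ be the components of $G^{*}$, each of which is by hypothesis a minimal $r$-regular subnetwork, i.e.\ a complete network on $r+1$ nodes, so that $k=m(r+1)$. By Theorem~\ref{Thm.2.3}, $x_{i}^{*}=1/k$ for all $i\in V^{*}$ and $\lambda^{*}=(r+1)/k$, with $(Ax^{*})_{i}=\lambda^{*}$ for every $i\in V^{*}$.

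The key structural observation is that the principal submatrix $\{A_{i,j}:\ i,j\in V^{*}\}$ is block diagonal, with each block equal to the all-ones matrix of size $r+1$. Indeed, if $i$ and $j$ lie in the same component $C_{l}$, then either $i=j$, in which case $A_{i,i}=1$, or $(i,j)\in E^{*}\subseteq E$ since $C_{l}$ is complete, in which case $A_{i,j}=1$; and if $i\in C_{l}$, $j\in C_{l'}$ with $l\neq l'$, then since $G^{*}$ is the subnetwork of $G$ induced on $V^{*}$ and is $r$-regular, every node already has all $r$ of its $G^{*}$-neighbors inside its own component $K_{r+1}$, hence $(i,j)\notin E$ and $A_{i,j}=0$.

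Next I would take an arbitrary $d\in D^{*}$, so $d_{i}=0$ for $i\in V\backslash V^{*}$, $\Sigma_{i\in V^{*}}d_{i}=0$, and $d\neq 0$, and expand
\begin{eqnarray*}
& \pi(x^{*}+\epsilon d,x^{*}+\epsilon d) = x^{*T}Ax^{*} + 2\epsilon\,d^{T}Ax^{*} + \epsilon^{2}d^{T}Ad. &
\end{eqnarray*}
Since $(Ax^{*})_{i}=\lambda^{*}$ for $i\in V^{*}$ and $d$ is supported on $V^{*}$ with $\Sigma_{i\in V^{*}}d_{i}=0$, the linear term vanishes: $d^{T}Ax^{*}=\lambda^{*}\Sigma_{i\in V^{*}}d_{i}=0$. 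Using the block structure from the previous step, the quadratic term collapses to a sum of squares,
\begin{eqnarray*}
& d^{T}Ad = \Sigma_{i,j\in V^{*}}d_{i}A_{i,j}d_{j} = \Sigma_{l=1}^{m}\left(\Sigma_{i\in C_{l}}d_{i}\right)^{2} \ge 0, &
\end{eqnarray*}
so that $\pi(x^{*}+\epsilon d,x^{*}+\epsilon d) = \pi(x^{*},x^{*}) + \epsilon^{2}d^{T}Ad \ge \pi(x^{*},x^{*})$ for all $\epsilon>0$, and by Definition~\ref{Def.3.2}, $x^{*}$ is weakly rigid on $G^{*}$.

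There is no serious obstacle here; the only point requiring care is the structural claim that the induced adjacency matrix on $V^{*}$ decomposes into all-ones blocks, which is exactly where the hypothesis that the components are \emph{minimal} $r$-regular (hence complete) subnetworks is used — without completeness the within-component contribution would not reduce to a single square, and without the components being genuinely disconnected in $G^{*}$ there could be cross terms of indefinite sign. It is also worth noting, although not required by the statement, that $d^{T}Ad=0$ whenever $d$ is supported on and sums to zero within a single component, so such a strategy is in general only weakly, and not strongly, rigid.
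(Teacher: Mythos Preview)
Your proof is correct and follows essentially the same route as the paper's own argument: expand $\pi(x^{*}+\epsilon d,x^{*}+\epsilon d)$, kill the linear term using $(Ax^{*})_{i}=\lambda^{*}$ on $V^{*}$ and $\Sigma_{i\in V^{*}}d_{i}=0$, and then observe that the quadratic form $d^{T}Ad$ restricted to $V^{*}$ decomposes over the components as $\Sigma_{l}(\Sigma_{i\in C_{l}}d_{i})^{2}\ge 0$ because each block of the induced adjacency matrix is an all-ones matrix. Your explicit justification that the induced submatrix on $V^{*}$ is block diagonal (using that $G^{*}$ is the induced subgraph, so distinct components of $G^{*}$ have no $E$-edges between them) is a nice addition that the paper leaves implicit.
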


\begin{proof}
Let $G_i^* = (V_i^*,E_i^*)$ be the $i$th $r$-regular subnetwork in $G^*$ and $A^{(i)}$ the adjacency matrix of $G_i^*$, $i = 1,\ldots,m$. Then, $A^{(i)}$ is a matrix of all $1$'s. Let $x^*+\epsilon d$ be a change from $x^*$ for any $d\in D^*$ and $\epsilon>0$ sufficiently small. 
\begin{eqnarray*}
& \pi(x^*+\epsilon d, x^*+\epsilon d) = (x^*+\epsilon d)^TA(x^*+\epsilon d) & \\
& = x^{*T}Ax^* + 2\epsilon d^TAx^* + \epsilon^2 d^TAd. &
\end{eqnarray*}
Note that $d = x - x^*$ for some $x\in S^*$, $x\neq x^*$. Therefore, $d\neq 0$, but $d_i = 0$ for $i\in V\backslash V^*$, and $\Sigma_{i\in V^*} d_i = 0$. Note also that $(Ax^*)_i = \lambda^*$ for all $i\in V^*$ and therefore, $d^TAx^* = \Sigma_{i\in V^*} d_i (Ax^*)_i = 0$. Let $c_i = \{d_j:\ j\in V_i^*\}$. Then, 
\begin{eqnarray*}
& d^TAd = \Sigma_{i=1:m} c_i^TA^{(i)}c_i = \Sigma_{i=1:m}(\Sigma_{j\in V_i^*} d_j)^2 \ge 0.&
\end{eqnarray*} 
It follows that 
\begin{eqnarray*}
& \pi(x^*+\epsilon d, x^*+\epsilon d) = (x^*+\epsilon d)^TA(x^*+\epsilon d) & \\
& = x^{*T}Ax^* + \epsilon^2 d^TAd \ge x^{*T}Ax^* = \pi(x^*,x^*). &
\end{eqnarray*}
By Definition~\ref{Def.3.2}, $x^*$ is weakly rigid on $G^*$.
\end{proof}

From the above proof, we see that if we update $x^*$ with a vector $d$ such that $\Sigma_{j\in V_i^*} d_j = 0$ for all $i = 1,\ldots,m$, we will have $d^TAd = 0$, and hence $(x^*+\epsilon d)^TA(x^*+\epsilon d) = x^{*T}Ax^*$, i.e., the minimal social contacts will remain to be the same. This means that we can make such small changes in frequencies on the subnetwork components independently without changing the original amount of social contacts across the network. In fact, we can make such changes on any group of supporting components as long as they are minimal regular subnetworks. 

Figure~\ref{Fig_5} shows an example network with $16$ social sites. It is structured like a small town, with $4$ connected town centers in the middle and $4$ separated residential areas outside around. Every residential area consists of $3$ well-connected activity sites, each having access to one of the town centers. The social distancing game on this network has at least three equilibrium strategies: one on the maximal $0$-regular subnetwork or in other words, the maximal independent set $\{5,6,7,8\}$ with the minimal social contact equal to $1/4$; one on the maximal $1$-regular subnetwork $\{9,10,11,12,13,14,15,16\}$ with the minimal social contact equal to $1/4$; and one on the maximal $2$-regular subnetwork $\{5,6,7,8,9,10,11,12,13,14,15,16\}$ with the minimal social contact equal to $1/4$. The minimal social contacts of the three strategies are all the same. However, based on Theorem~\ref{Thm.3.1}, \ref{Thm.3.2}, and \ref{Thm.3.3} , the first one is strictly rigid; the second and third are flexible and only weakly rigid. In addition, the supporting subnetworks for the second and third strategies consist of disconnected minimal regular subnetworks. The frequencies on those minimal subnetworks can be adjusted independently without affecting the social contacts of the individuals. For example, for the third strategy, for each minimal $2$-regular subnetwork component, say the subnetwork $\{5,9,10\}$, a small fraction $1/24$ can be moved from each of the outer most two sites, say the sites $9$ and $10$, to the inner site, say the site $5$. The fractions on the outer sites are then decreased from $1/12$ to $1/24$ while on the inner site is increased from $1/12$ to $4/24$. However, such changes will not affect the original minimized social contact of the population.       

\begin{figure}[h]
\centering
\includegraphics[width=0.4\textwidth]{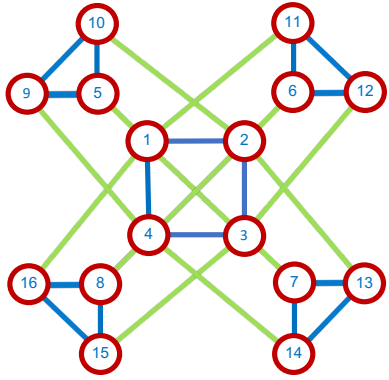}
\vspace*{12pt}
\caption{Rigidity and Flexibility of Distancing Strategies. The social distancing game on this network has at least three equilibrium strategies: one on the maximal $0$-regular subnetwork or in other words, maximal independent set $\{5,6,7,8\}$ with the minimal social contact equal to $1/4$; one on the maximal $1$-regular subnetwork $\{9,10,11,12,13,14,15,16\}$ with the minimal social contact equal to $1/4$; and one on $\{5,6,7,8,9,10,11,12,13,14,15,16\}$ with the minimal social contact equal to $1/4$. The first one is strictly rigid. The second and third are flexible and weakly rigid.}
\label{Fig_5}
\end{figure}

Finally, if a maximal $r$-regular subnetwork is connected but not minimal, or disconnected but at least one of the components is not minimal, then an equilibrium strategy supported by such a subnetwork will not be weakly rigid though still flexible. It will in fact be fragile, i.e., there will be a certain way to update the strategy on the subnetwork to decrease the social contacts among the individuals. Such changes will make the population to break away from the current equilibrium state. An example fragile strategy is the one on the whole network for the game shown in Figure~\ref{Fig_4}. The whole network is a $3$-regular network, but it is not minimal. 

\begin{theorem}
\label{Thm.3.4}
Let $x^*\in S$ be an equilibrium strategy for the social distancing game on a network $G = (V,E)$ supported by a maximal $r$-regular subnetwork $G^* = (V^*,E^*)$, $r > 0$. If $G^*$ is connected but not minimal, or disconnected but at least one of its components is not minimal, then $x^*$ is fragile on $G^*$.
\end{theorem}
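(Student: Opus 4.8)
The plan is to reduce fragility to exhibiting one feasible perturbation along which the payoff strictly decreases, and then to build that perturbation from an induced three-vertex path inside a non-minimal component. First I would expand, exactly as in the proofs of Theorems~\ref{Thm.3.1}--\ref{Thm.3.3}, $\pi(x^*+\epsilon d,x^*+\epsilon d) = x^{*T}Ax^* + 2\epsilon\,d^TAx^* + \epsilon^2\,d^TAd$ for $d\in D^*$. Since $x^*$ is a Nash equilibrium with $x_i^*>0$ on $V^*$, Theorem~\ref{Thm.2.1} gives $(Ax^*)_i=\lambda^*$ for every $i\in V^*$; together with $d_i=0$ for $i\in V\backslash V^*$ and $\Sigma_{i\in V^*}d_i=0$ this annihilates the linear term, so $\pi(x^*+\epsilon d,x^*+\epsilon d)=\pi(x^*,x^*)+\epsilon^2 d^TAd$. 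Hence it is enough to produce a vector $d_0$ with $(d_0)_i=0$ off $V^*$, $\Sigma_{i\in V^*}(d_0)_i=0$, and $d_0^TAd_0<0$: a small multiple $d=\delta d_0$ then lies in $D^*$ (because $x_i^*>0$ on $V^*$) and gives $\pi(x^*+\epsilon d,x^*+\epsilon d)<\pi(x^*,x^*)$ for all $\epsilon>0$, which is fragility in the sense of Definition~\ref{Def.3.4}.

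Next I would localise to a single bad component. By hypothesis $G^*$ has a component $H=(V_H,E_H)$ that is connected, $r$-regular, and has $|V_H|\ge r+2$ (when $G^*$ is ``connected but not minimal'' we simply take $H=G^*$). Because $G^*$ has no edges between distinct components, the restriction of $A$ to $V^*$ is block-diagonal, so for any $d_0$ supported on $V_H$ we have $d_0^TAd_0=d_0^TA_Hd_0$, where $A_H$ is the unit-diagonal adjacency matrix of $H$; and $\Sigma_{i\in V_H}(d_0)_i=0$ already forces $\Sigma_{i\in V^*}(d_0)_i=0$. So the theorem reduces to: a connected $r$-regular graph on at least $r+2$ vertices admits a zero-sum vector $d_0$ on its vertices with $d_0^TA_Hd_0<0$.

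To get this, I would use that such an $H$ is connected but not complete --- the only $r$-regular complete graph is $K_{r+1}$, and $|V_H|\ge r+2$ --- hence it contains an induced path on three vertices, i.e.\ vertices $a,c,b$ with $a\sim c$, $c\sim b$, $a\not\sim b$. Put $(d_0)_a=(d_0)_b=1$, $(d_0)_c=-2$, and $(d_0)_i=0$ at every other vertex. This $d_0$ has zero sum, and since $a\not\sim b$ the only edges with both endpoints carrying nonzero weight are $(a,c)$ and $(c,b)$, so $d_0^TA_Hd_0 = (1+1+4) + 2(d_0)_a(d_0)_c + 2(d_0)_c(d_0)_b = 6-4-4 = -2 < 0$. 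Transporting $d_0$ back to $G$ and running the first paragraph gives the claim.

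The one ingredient I expect to be non-routine is the graph-theoretic fact that a connected graph which is not complete contains an induced $P_3$ (equivalently, in a spectral phrasing, that a connected $r$-regular graph on more than $r+1$ vertices has smallest adjacency eigenvalue strictly less than $-1$, whose eigenvector is automatically orthogonal to the Perron vector $\mathbf 1$ and can be used directly as $d_0$). Everything else is bookkeeping, but that bookkeeping genuinely needs the reduction to a single component first: the perturbation must be confined to $V_H$ so that no edge joining $H$ to the rest of $G^*$ enters the quadratic form, which is why working across all of $G^*$ at once is the wrong first move.
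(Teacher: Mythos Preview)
Your proposal is correct and follows essentially the same route as the paper. The paper also reduces (via ``without loss of generality, consider only the case where $G^*$ is connected'') to a single non-minimal component, picks three vertices $i,j,l$ with $(i,l),(j,l)\in E^*$ but $(i,j)\notin E^*$---exactly your induced $P_3$ with center $l=c$ and leaves $i=a$, $j=b$---sets $d_i=d_j=\delta$, $d_l=-2\delta$, and computes $d^TAd=(d_i+d_j+d_l)^2-2d_id_j=-2\delta^2<0$, matching your value $-2$ after scaling; your spectral remark and explicit block-diagonal localisation are nice extras but not needed for the argument.
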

 
\begin{proof}
Without loss of generality, we consider only the case where $G^*$ is connected. If $G^*$ is not a minimal $r$-regular subnetwork, there must be three nodes $i,j,l$ such that $(i,l),(j,l)\in E^*$, but $(i,j)\not \in E^*$. Let $d$ be a vector such that $d_k = 0$ for all $k\neq i,j,l$, $d_i = d_j = \delta$, with $0 < \delta < \min \{1-x_i^*,1-x_j^*,x_l^*/2\}$, and $d_i+d_j+d_l=0$. Then, $d\in D^*$. Let $x^*+\epsilon d$ be a change from $x^*$ along $d\in D^*$ for $\epsilon>0$ sufficiently small. Then, 
\begin{eqnarray*}
& \pi(x^*+\epsilon d, x^*+\epsilon d) = (x^*+\epsilon d)^TA(x^*+\epsilon d) & \\
& = x^{*T}Ax^* + 2\epsilon d^TAx^* + \epsilon^2 d^TAd. &
\end{eqnarray*}
Note that $(Ax^*)_k = \lambda^*$ for all $k \in V^*$, and therefore, 
\begin{eqnarray*}
& d^TAx^* = \lambda^*\Sigma_{k\in V^*} d_k = d_i + d_j + d_l = 0. &
\end{eqnarray*}
Note also, 
\begin{eqnarray*}
& d^TAd = d_i^2+d_j^2+d_l^2+2d_id_l+2d_jd_l & \\
& = (d_i+d_j+d_l)^2-2d_id_j = -2d_id_j < 0. &
\end{eqnarray*}
Therefore, 
\begin{eqnarray*}
& \pi(x^*+\epsilon d, x^*+\epsilon d) = (x^*+\epsilon d)^TA(x^*+\epsilon d) & \\
& = x^{*T}Ax^* + \epsilon^2 d^TAd < x^{*T}Ax^* = \pi(x^*,x^*). &
\end{eqnarray*}
By Definition~\ref{Def.3.4}, $x^*$ is fragile on $G^*$.
\end{proof}   

\section{Extension to Weighted Networks}
\label{extension_to_weighted_networks}

So far, in all our discussions, we have assumed that the social sites are all the same for making social contacts. However, in real situation, different sites may have different sizes, different population densities, different activities, and hence different contact rates. Therefore, a more realistic model for a given world of social sites would be a weighted network of social sites, with each site assigned a contact value. The total amount of social contacts an individual can make will then be those made within each site and through connected sites, all weighted by the contact values of the sites. We show that for such a weighted network of social sites, we can define a similar social distancing game as we have discussed in previous sections, and have similar distancing strategies and related rigidity and flexibility properties. 

Let $A$ be the adjacency matrix of a given network $G$. Suppose for each social site $i$, we assign a contact value $w_i \ge 1$. Then, in a population of strategy $y$, the total amount of social contacts an individual can make at social site $i$ will be $p_i(y) = \Sigma_j \tilde{A}_{i,j}y_j$, where $\tilde{A}_{i,j}$ is a value dependent of $w_i$ and $w_j$. We consider two methods to define $\tilde{A}_{i,j}$: (a) with an additive weight, $\tilde{A}_{i,j} = (w_iA_{i,j} + A_{i,j}w_j) / 2$, i.e., if $A_{i,j} = 1$, $\tilde{A}_{i,j}$ is the sum of the weights of site $i$ and $j$; and (b) with a multiplicative weight, $\tilde{A}_{i,j} = w_iA_{i,j}w_j$, i.e., if $A_{i,j} = 1$, $\tilde{A}_{i,j}$ is the multiplication of the weights of site $i$ and $j$. Then, if an individual takes a strategy $x$ to visit all the social sites, the total amount of social contacts the individual can make will be $\Sigma_i x_i p_i(y) = \Sigma_i\Sigma_j x_i\tilde{A}_{i,j}y_j = x^T\tilde{A}y$, where $\tilde{A} = \{\tilde{A}_{i,j}\}$. Let $w = (w_1,\ldots,w_n)^T$, and $W = {\rm diag}[w]$, i.e., a diagonal matrix with $w$ as its diagonal vector. Then, $\tilde{A}$ is a matrix with the same nonzero entries of $A$, but weighted by $W$, and with additive weights, $\tilde{A} = (WA + AW) / 2$, and with multiplicative weights, $\tilde{A} = WAW$. 

By considering $\tilde{A}$ as a weighted adjacency matrix of network $G$, and $\pi(x,y) = x^T\tilde{A}y$ as a payoff function, we can then define a social distancing game, where each individual tries to minimize his/her weighted social contacts $\pi(x,y) = x^T\tilde{A}y$. A Nash equilibrium of this game is then a strategy $x^*$ such that
\begin{eqnarray*}
\label{weighted_social_distancing_game}
\pi(x^*,x^*) \le \pi(x,x^*),\ \ \forall x\in S.
\end{eqnarray*}
We call this game a weighted social distancing game on network $G$. Recall for social distancing, we count the contacts among individuals in the same sites, and therefore ${A}_{i,i} = 1$ for all $i = 1,\ldots,n$. It follows that $\tilde{A}_{i,i}$ are nonzero, and $\tilde{A}_{i,i} = w_i$ for additive weights, and $\tilde{A}_{i,i} = w_i^2$ for multiplicative weights for all $i = 1, \ldots, n$.

The conditions for a given strategy to be a Nash equilibrium for the above weighted social distancing game are the same as those for the social distancing game on an unweighted network as given in Theorem~\ref{Thm.2.1}, except that the adjacency matrix $A$ is replaced by the weighted adjacency matrix $\tilde{A}$. We state these conditions in the following theorem as a reference without repeating the proof.   

\begin{theorem} 
\label{Thm.4.1}
Let $\tilde{A}$ be the weighted adjacency matrix of network $G$. Then, a strategy $x^*\in S$ is a Nash equilibrium of the weighted social distancing game on $G$ if and only if there is a scalar $\lambda^*$ such that $p_i(x^*) = \lambda^*$ for all $i$ such that $x_i^* > 0$ and $p_i(x^*) \ge \lambda^*$ for all $i$ such that $x_i^* = 0$, where $p_i(x^*) = \Sigma_j \tilde{A}_{i,j} x_j^*$.
\end{theorem}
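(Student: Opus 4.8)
The plan is to reuse the proof of Theorem~\ref{Thm.2.1} essentially verbatim, the key observation being that that argument never exploited any structural feature of the adjacency matrix $A$ --- only the bilinearity of the payoff $\pi(x,y)=x^T\tilde{A}y$ in its two arguments and the identity $\pi(e_i,x^*)=p_i(x^*)$, where $e_i$ denotes the $i$th standard unit vector. Both facts survive intact when $A$ is replaced by $\tilde{A}$ and $p_i(x^*)=\Sigma_j\tilde{A}_{i,j}x_j^*$, since $\tilde{A}=(WA+AW)/2$ or $\tilde{A}=WAW$ is again a fixed real $n\times n$ matrix and $e_i^T\tilde{A}x^*=\Sigma_j\tilde{A}_{i,j}x_j^*$ holds by definition. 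Neither the symmetry nor the sign pattern of $\tilde{A}$ is required; only that it is a fixed matrix.

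For the sufficiency ($\Leftarrow$) direction I would assume the stated complementarity condition holds for some scalar $\lambda^*$, compute $\pi(x^*,x^*)=\Sigma_i x_i^*p_i(x^*)=\lambda^*$ using $p_i(x^*)=\lambda^*$ on the support of $x^*$ together with $\Sigma_i x_i^*=1$, and then, for an arbitrary $x\in S$, use $p_i(x^*)\ge\lambda^*$ for every index $i$ to obtain $\pi(x,x^*)=\Sigma_i x_ip_i(x^*)\ge\lambda^*\Sigma_i x_i=\lambda^*=\pi(x^*,x^*)$, so $x^*$ is a Nash equilibrium. For the necessity ($\Rightarrow$) direction I would set $\lambda^*=\pi(x^*,x^*)$, test the equilibrium inequality $\pi(x,x^*)\ge\pi(x^*,x^*)$ against the pure strategies $x=e_i\in S$ to get $p_i(x^*)\ge\lambda^*$ for every $i$ (in particular for all $i$ with $x_i^*=0$), and rule out $p_i(x^*)>\lambda^*$ on the support of $x^*$ by observing that it would force $\pi(x^*,x^*)=\Sigma_i x_i^*p_i(x^*)>\lambda^*\Sigma_i x_i^*=\lambda^*$, a contradiction; hence $p_i(x^*)=\lambda^*$ whenever $x_i^*>0$.

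The main point --- rather than an obstacle --- is simply to recognize that the proof of Theorem~\ref{Thm.2.1} is weight-agnostic, so that the additive and multiplicative weightings, and indeed any choice of $\tilde{A}$ sharing the support of $A$ with a positive diagonal, are all handled simultaneously by one and the same argument with no new computation. This is exactly why the paper can state the result ``as a reference without repeating the proof.''
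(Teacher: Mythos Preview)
Your proposal is correct and matches the paper's approach exactly: the paper does not give a separate proof for Theorem~\ref{Thm.4.1} but simply notes that the argument of Theorem~\ref{Thm.2.1} carries over verbatim with $A$ replaced by $\tilde{A}$, which is precisely what you do and justify. Your explicit observation that the proof is weight-agnostic (depending only on bilinearity of $\pi$ and the identity $\pi(e_i,x^*)=p_i(x^*)$) is a welcome clarification of why this transfer is legitimate.
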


Note that the first set of conditions, which we call the equality conditions, requires the social contacts on all selected nodes to be equal; and the second set, which we call the inequality conditions, requires the social contacts on the unselected nodes to be no less than those on the selected ones; and in either case, the social contacts are weighted by the weights on the corresponding network sites.   

In a weighted network, an optimal distancing strategy can also be obtained by distancing on a maximal independent set of social sites, i.e., with a maximal independent set as the supporting subnetwork. However, the distribution of the frequencies over the selected sites is not necessarily uniform any more, for the social sites are all weighted by the contact values. Also, the inequality conditions for the strategy are not necessarily satisfied automatically, either:

\begin{theorem}
\label{Thm.4.2}
Let $x^*\in S$ be a strategy for the weighted social distancing game on $G = (V,E)$, with $\tilde{A} = (AW+WA)/2$. Let $G^* = (V^*,E^*)$ be the supporting subnetwork for $x^*$. Then, if $G^*$ is a maximal independent set of $G$ and if for any $i\in V\backslash V^*$, there is $j\in V^*$ such that $(i,j)\in E$ and $w_i\ge w_j$, $x^*$ is a Nash equilibrium for the game, with $x_i^* = \bar{w}/w_i$ for all $i\in V^*$ and $x_i^* = 0$ for all $i\in V\backslash V^*$, where $1/\bar{w} = \Sigma_{i\in V^*} 1/w_i$.
\end{theorem}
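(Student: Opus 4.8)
The plan is to invoke Theorem~\ref{Thm.4.1} (the weighted analogue of Theorem~\ref{Thm.2.1}) and show that the proposed $x^*$ satisfies the equality and inequality conditions with the scalar $\lambda^* = \bar w$. First I would record the shape of the weighted adjacency matrix: since $\tilde A = (AW + WA)/2$ and $W = {\rm diag}[w]$, we have $\tilde A_{i,j} = A_{i,j}(w_i + w_j)/2$ for $i \ne j$, and $\tilde A_{i,i} = A_{i,i}w_i = w_i$ because $A_{i,i} = 1$. Because $G^*$ is an independent set, $A_{i,j} = 0$ and hence $\tilde A_{i,j} = 0$ for all distinct $i,j \in V^*$, so the restriction of $\tilde A$ to $V^*\times V^*$ is the diagonal matrix ${\rm diag}[w_i : i \in V^*]$.

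Next I would pin down $x^*$ and $\lambda^*$ from the equality conditions. For $i \in V^*$, since $x_j^* = 0$ off $V^*$, we get $p_i(x^*) = \sum_{j\in V^*}\tilde A_{i,j}x_j^* = w_i x_i^*$. Imposing $p_i(x^*) = \lambda^*$ for every $i \in V^*$ forces $x_i^* = \lambda^*/w_i$, and the normalization $\sum_{i\in V^*} x_i^* = 1$ then gives $\lambda^*\sum_{i\in V^*} 1/w_i = 1$, i.e.\ $\lambda^* = \bar w$ and $x_i^* = \bar w/w_i$, exactly the claimed strategy. Note that $\bar w$ is well defined and positive because each $w_i \ge 1$, so indeed $x_i^* > 0$ for all $i \in V^*$ as required for a supporting subnetwork.

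The substantive step is checking the inequality conditions $p_i(x^*) \ge \lambda^* = \bar w$ for each $i \in V\backslash V^*$. Since $x^*$ is supported on $V^*$ and each summand $A_{i,j}\tfrac{w_i+w_j}{2}x_j^*$ is nonnegative, for any single neighbor $j_0 \in V^*$ of $i$ we have $p_i(x^*) = \sum_{j\in V^*} A_{i,j}\tfrac{w_i+w_j}{2}x_j^* \ge \tfrac{w_i+w_{j_0}}{2}x_{j_0}^*$. By hypothesis I may pick $j_0 \in V^*$ with $(i,j_0)\in E$ and $w_i \ge w_{j_0}$; then $\tfrac{w_i+w_{j_0}}{2}x_{j_0}^* = \tfrac{w_i+w_{j_0}}{2}\cdot\tfrac{\bar w}{w_{j_0}} \ge \tfrac{2w_{j_0}}{2}\cdot\tfrac{\bar w}{w_{j_0}} = \bar w$. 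Hence $p_i(x^*) \ge \bar w = \lambda^*$, and Theorem~\ref{Thm.4.1} yields that $x^*$ is a Nash equilibrium.

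I expect the only real obstacle to be this last inequality. In the unweighted case (Theorem~\ref{Thm.2.2}), maximality of the independent set alone guaranteed that an outside node has a neighbor inside, and that sufficed; here the weighting can shrink the contact $\tfrac{w_i+w_{j_0}}{2}x_{j_0}^*$ below $\bar w$ if the chosen neighbor $j_0$ happens to be much heavier than $i$, which is precisely why the extra assumption (that some inside neighbor $j$ of $i$ satisfies $w_i \ge w_j$) is needed, and why the argument must select that particular neighbor rather than an arbitrary one. Everything else is a routine diagonalization-and-normalization computation.
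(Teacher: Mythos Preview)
Your proposal is correct and follows essentially the same approach as the paper's own proof: both compute the diagonal form of $\tilde A$ on $V^*$, solve the equality conditions to get $x_i^* = \bar w/w_i$ and $\lambda^* = \bar w$, and then verify the inequality conditions by bounding $p_i(x^*)$ below using the single guaranteed neighbor $j_0\in V^*$ with $w_i\ge w_{j_0}$, yielding $\tfrac{w_i+w_{j_0}}{2}x_{j_0}^* \ge w_{j_0}x_{j_0}^* = \bar w$. Your closing remark on why the extra weight hypothesis is needed (and why an arbitrary inside neighbor would not do) is a helpful addition but not a departure from the paper's argument.
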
  

\begin{proof}
Since $G^*$ supports $x^*$, $x_i^* > 0$ for all $i\in V^*$ and $x_i^* = 0$ for all $i\in V\backslash V^*$. Also, $\tilde{A}_{i,i} = w_i$ for all $i\in V^*$, and $\tilde{A}_{i,j} = 0$ for all $i,j\in V^*$, $i\neq j$. Let $p_i(x^*) = \Sigma_{j\in V^*} \tilde{A}_{i,j} x_j^* = \lambda^*$ for all $i\in V^*$. Then, $\tilde{A}_{i,i} x_i^* = \lambda^*$ , i.e., $w_i x_i^* = \lambda^*$, for all $i\in V^*$. Since $\Sigma_{i\in V^*} x_i^* = 1$, $\lambda^* = \bar{w}$ and $x_i^* = \lambda^* / w_i = \bar{w} / w_i$ for all $i\in V^*$. Since for any $i\in V\backslash V^*$, there is $l\in V^*$ such that $(i,l)\in E$ and $w_i\ge w_l$, $\tilde{A}_{i,l} = (w_i + w_l) / 2 \ge w_l$.  Therefore, for any $i\in V\backslash V^*$, 
\begin{eqnarray*}
& p_i(x^*) =  \Sigma_{j\in V^*} \tilde{A}_{i,j} x_j^* = \Sigma_{j\in V^*,(i,j)\in E} (w_i + w_j) x_j^* / 2 & \\
& \ge (w_i + w_l) x_l^* / 2 \ge w_l x_l^* = \bar{w} = \lambda^*. &
\end{eqnarray*}
By Theorem~\ref{Thm.4.1}, $x^*$ is a Nash equilibrium. 
\end{proof}

\begin{theorem}
\label{Thm.4.3}
Let $x^*\in S$ be a strategy for the weighted social distancing game on $G = (V,E)$, with $\tilde{A} = WAW$. Let $G^* = (V^*,E^*)$ be the supporting subnetwork for $x^*$. Then, if $G^*$ is a maximal independent set of $G$ and if for any $i\in V\backslash V^*$, there is $j\in V^*$ such that $(i,j)\in E$ and $w_i\ge w_j$, $x^*$ is a Nash equilibrium for the game, with $x_i^* = \tilde{w}^2/w_i^2$ for all $i\in V^*$ and $x_i^* = 0$ for all $i\in V\backslash V^*$, where $1/\tilde{w}^2 = \Sigma_{i\in V^*} 1 / w_i^2$.
\end{theorem}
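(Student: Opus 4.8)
The plan is to follow the template of the proof of Theorem~\ref{Thm.4.2}, replacing the additive weighted adjacency matrix $(AW+WA)/2$ by the multiplicative one $\tilde{A} = WAW$, and then to close the argument by appealing to the equilibrium characterization in Theorem~\ref{Thm.4.1}.

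First I would determine the shape of $\tilde{A}$ on the selected set. Because $G^*$ is a maximal independent set, $A_{i,j} = 0$ for distinct $i,j\in V^*$ and $A_{i,i}=1$, so $\tilde{A}_{i,j} = w_iA_{i,j}w_j$ is $0$ off the diagonal of $V^*$ and equals $w_i^2$ on it. Since $G^*$ supports $x^*$, we have $x_i^*>0$ for $i\in V^*$ and $x_i^*=0$ otherwise, hence $p_i(x^*) = \Sigma_{j\in V^*}\tilde{A}_{i,j}x_j^* = w_i^2 x_i^*$ for every $i\in V^*$.

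Next I would solve the equality conditions of Theorem~\ref{Thm.4.1}. Imposing $w_i^2 x_i^* = \lambda^*$ for all $i\in V^*$ gives $x_i^* = \lambda^*/w_i^2$, and the normalization $\Sigma_{i\in V^*} x_i^* = 1$ then forces $\lambda^*\,\Sigma_{i\in V^*}1/w_i^2 = 1$, i.e.\ $\lambda^* = \tilde{w}^2$ and $x_i^* = \tilde{w}^2/w_i^2$, exactly as claimed; these quantities are positive, so the support of $x^*$ is indeed $V^*$.

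Finally I would check the inequality conditions on the unselected nodes. Fix $i\in V\backslash V^*$ and take the guaranteed $l\in V^*$ with $(i,l)\in E$ and $w_i\ge w_l$; since all weights are at least $1$ and hence positive, $\tilde{A}_{i,l} = w_iw_l \ge w_l^2$. Discarding the remaining nonnegative terms, $p_i(x^*) = \Sigma_{j\in V^*}\tilde{A}_{i,j}x_j^* \ge w_iw_l x_l^* \ge w_l^2 x_l^* = w_l^2(\tilde{w}^2/w_l^2) = \tilde{w}^2 = \lambda^*$. With both sets of conditions verified, Theorem~\ref{Thm.4.1} gives that $x^*$ is a Nash equilibrium. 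The only delicate point is this last step: the hypothesis that every outside node has a neighbour in $V^*$ of no larger weight is precisely what makes $w_iw_l\ge w_l^2$, and without it a low-weight outside node could drive $p_i(x^*)$ below $\lambda^*$ and break the equilibrium --- it plays the same role here as the $r+1$ links condition does in Theorem~\ref{Thm.2.3}. Everything else is a routine computation essentially identical to the additive case.
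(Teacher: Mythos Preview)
Your proposal is correct and follows essentially the same approach as the paper's own proof: compute $\tilde{A}$ on the independent set to get $p_i(x^*) = w_i^2 x_i^*$, solve the equality conditions with the normalization to obtain $x_i^* = \tilde{w}^2/w_i^2$ and $\lambda^* = \tilde{w}^2$, then verify the inequality conditions for $i\in V\backslash V^*$ via the guaranteed neighbour $l$ with $w_i\ge w_l$ so that $w_iw_l x_l^* \ge w_l^2 x_l^* = \tilde{w}^2$, and conclude by Theorem~\ref{Thm.4.1}. Your write-up is in fact slightly cleaner in that it states $\tilde{A}_{i,i} = w_i^2$ explicitly (the paper's proof has a typographical slip writing $\tilde{A}_{i,i}=w_i$ before using $w_i^2x_i^*=\lambda^*$).
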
  

\begin{proof}
Since $G^*$ supports $x^*$, $x_i^* > 0$ for all $i\in V^*$ and $x_i^* = 0$ for all $i\in V\backslash V^*$. Also, $\tilde{A}_{i,i} = w_i$ for all $i\in V^*$, and $\tilde{A}_{i,j} = 0$ for all $i,j\in V^*$, $i\neq j$. Let $p_i(x^*) = \Sigma_{j\in V^*} \tilde{A}_{i,j} x_j^* = \lambda^*$ for all $i\in V^*$. Then, $\tilde{A}_{i,i} x_i^* = \lambda^*$ , i.e., $w_i^2 x_i^* = \lambda^*$, for all $i\in V^*$. Since $\Sigma_{i\in V^*} x_i^* = 1$, $\lambda^* = \tilde{w}^2$ and $x_i^* = \lambda^* / w_i^2 = \tilde{w}^2 / w_i^2$ for all $i\in V^*$. Since for any $i\in V\backslash V^*$, there is $l\in V^*$ such that $(i,l)\in E$ and $w_i\ge w_l$, $\tilde{A}_{i,l} = w_i w_l \ge w_l^2$.  Therefore, for any $i\in V\backslash V^*$, 
\begin{eqnarray*}
& p_i(x^*) =  \Sigma_{j\in V^*} \tilde{A}_{i,j} x_j^* = \Sigma_{j\in V^*,(i,j)\in E}\hspace{1pt} w_i w_j x_j^* & \\
& \ge w_i w_l x_l^* \ge w_l^2 x_l^* = \tilde{w}^2 = \lambda^*. &
\end{eqnarray*}
By Theorem~\ref{Thm.4.1}, $x^*$ is a Nash equilibrium. 
\end{proof}

Note that the above two theorems show that an optimal strategy $x^*$ supported by a maximal independent set has $x_i^* = \bar{w}/w_i$ for networks with additive weights and $x_i^* = \tilde{w}^2/w_i^2$ for networks with multiplicative weights. In either case, $x_i$, the frequency of visiting or staying at social site $i$, is inversely proportional to $w_i$, the contact value of social site $i$. This  implies that the larger the contact value at a given social site, the smaller the corresponding visiting frequency, and hence the probable fraction of the population at the site. In addition, the total amount of social contacts every individual can make at equilibrium is given by the parameter $\lambda^*$, which is equal to $\bar{w}$ for networks with additive weights and to $\tilde{w}^2$ for networks with multiplicative weights. In either case, the larger the number of nodes in the supporting subnetwork or the lighter the assigned weights, the smaller the $\lambda^*$ value, i.e., the minimal social contacts at equilibrium.

As an example, consider the network in Figure~\ref{Fig_4}. Assume that the nodes in $\{1,2,3,4,5\}$ are all assigned a weight 2, i.e., $w_1=w_2=w_3=w_4=w_5=2$, and the nodes in $\{6,7,8,9,10\}$ are all assigned a weight 1, i.e., $w_6=w_7=w_8=w_9=w_{10}=1$. Consider three maximal independent sets of the network, $\{3,5,9\}$, $\{4,6,7\}$, and $\{3,5,6,7\}$. Assume that the additive weights are used for the calculation of the social contacts. First, for $\{3,5,9\}$, $\lambda^* = \bar{w} = 1/ (1/w_3+1/w_5+1/w_9) = 1/(1/2+1/2+1) = 1/2$. Let $x_3^*=\bar{w}/w_3=1/4$, $x_5^*=\bar{w}/w_5=1/4$, $x_9^*=\bar{w}/w_9=2/4$, and $x_i^*=0$ for all $i\neq 3,5,9$. Then, $x^*$ satisfies the equality conditions in Theorem~\ref{Thm.4.1}. However, for $i=8$, $p_i(x^*) = (w_8+w_3)x_3^*/2 = 3/8 < 1/2 = \lambda^*$. Then, $x^*$ violates the inequality conditions in Theorem~\ref{Thm.4.1}, and therefore cannot be a Nash equilibrium. Second, for $\{4,6,7\}$, $\lambda^* = \bar{w} = 1/ (1/w_4+1/w_6+1/w_7) = 1/(1/2+1+1)=2/5$. Let $x_4^*=\bar{w}/w_4=1/5$, $x_6^*=\bar{w}/w_6=2/5$, $x_7^*=\bar{w}/w_7=2/5$, and $x_i^*=0$ for $i\neq 4,6,7$. Also, for any $i\neq 4,6,7$, there is a node $l\in \{4,6,7\}$ such that $i$ and $l$ are connected and $w_i\ge w_l$. Therefore, by Theorem~\ref{Thm.4.2}, $x^*$ is a Nash equilibrium, with the total amount of social contacts minimized to $\lambda^* = 2/5$. Third, for $\{3,5,6,7\}$, $\lambda^* = \bar{w} = 1/ (1/w_3+w_5+1/w_6+1/w_7) = 1/(1/2+1/2+1+1)=1/3$. Let $x_3^*=\bar{w}/w_3=1/6$, $x_5^*=\bar{w}/w_5=1/6$, $x_6^*=\bar{w}/w_6=2/6$, $x_7^*=\bar{w}/w_7=2/6$, and $x_i^*=0$ for $i\neq 3,5,6,7$. Also, for any $i\neq 3,5,6,7$, there is a node $l\in \{3,5,6,7\}$ such that $i$ and $l$ are connected and $w_i\ge w_l$. Therefore, by Theorem~\ref{Thm.4.2}, $x^*$ is a Nash equilibrium, with the total amount of social contacts minimized to $\lambda^* = 1/3$.

The strategies on a general maximal $r$-regular subnetwork are more complicated if the subnetwork has arbitrary weights assigned. However, they can still be determined easily, if the weights are the same for all the nodes in each connected component of the subnetwork. For the following two theorems, we therefore assume that the supporting subnetwork $G^*$ is a maximal $r$-regular subnetwork and consists of $m$ connected components $G_1^*,\ldots,G_m^*$. For each component $G_k$, a weight $w_k'$ is assigned to all its nodes.  

\begin{theorem}
\label{Thm.4.4}
Let $x^*\in S$ be a strategy for the weighted social distancing game on $G = (V,E)$, with $\tilde{A} = (AW+WA)/2$. Let $G^* = (V^*,E^*)$ be the supporting subnetwork for $x^*$. Assume that $G^*$ consists of $m$ connected components $G_k^*=(V_k^*,E_k^*)$, each assigned a weight $w_k'$, $k = 1,\ldots,m$. Then, if $G^*$ is a maximal $r$-regular subnetwork and if for any node $i\in V\backslash V^*$, there are at least $r+1$ nodes $j\in V^*$ such that $(i,j)\in E$ and $w_i\ge w_j$, $x^*$ is a Nash equilibrium for the game, with $x_i^{*} = \bar{w}/w_k'$ for all $i\in V_k^*$, $k = 1,\ldots,m$, and $x_i^* = 0$ for all $i\in V\backslash V^*$, where $1/\bar{w} = \Sigma_k m_k/w_k'$, with $m_k$ being the size of $V_k^*$.
\end{theorem}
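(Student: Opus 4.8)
The plan is to apply Theorem~\ref{Thm.4.1} to the candidate $x^*$ in the statement: after checking $x^*\in S$, it remains to verify the equality conditions $p_i(x^*)=\lambda^*$ for every $i\in V^*$ and the inequality conditions $p_i(x^*)\ge\lambda^*$ for every $i\in V\backslash V^*$, where $p_i(x^*)=\Sigma_j\tilde{A}_{i,j}x_j^*$, $\tilde{A}_{i,j}=A_{i,j}(w_i+w_j)/2$ for $i\neq j$, and $\tilde{A}_{i,i}=w_i$. The first step I would carry out is to record the local shape of $\tilde{A}$ on $G^*$. Since every vertex of a component $V_k^*$ carries the weight $w_k'$, for $i\in V_k^*$ we have $\tilde{A}_{i,i}=w_k'$; and since $G^*$ is $r$-regular \emph{as a subnetwork}, $i$ has exactly $r$ neighbors inside $G^*$, all of which lie in the \emph{same} component $V_k^*$ (a link of $E$ joining two vertices of $V^*$ belongs to $E^*$, hence joins two vertices of one component), so $\tilde{A}_{i,j}=(w_k'+w_k')/2=w_k'$ for each such neighbor $j$. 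This vertex-by-vertex cancellation of the weights is what drives the whole argument.

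With that in hand, feasibility is immediate: $\Sigma_i x_i^*=\Sigma_{k=1}^m m_k(\bar{w}/w_k')=\bar{w}\,\Sigma_k m_k/w_k'=1$ by the defining relation $1/\bar{w}=\Sigma_k m_k/w_k'$. For the equality conditions, fix $i\in V_k^*$; since $x_j^*=0$ for $j\notin V^*$,
\begin{eqnarray*}
& p_i(x^*) = \tilde{A}_{i,i}x_i^* + \Sigma_{j\in V_k^*,\,(i,j)\in E}\tilde{A}_{i,j}x_j^* & \\
& = w_k'(\bar{w}/w_k') + r\,w_k'(\bar{w}/w_k') = (r+1)\bar{w}, &
\end{eqnarray*}
so $p_i(x^*)$ equals the common value $\lambda^*:=(r+1)\bar{w}$ for all $i\in V^*$, independent of the component. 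One can equally well \emph{derive} $x^*$ rather than guess it: imposing $p_i(x^*)=\lambda^*$ with a component-constant ansatz $x_i^*=c_k$ on $V_k^*$ gives $(r+1)w_k'c_k=\lambda^*$, and together with $\Sigma_k m_k c_k=1$ this forces $\lambda^*=(r+1)\bar{w}$ and $c_k=\bar{w}/w_k'$, just as $\lambda^*=(r+1)/k$, $x_i^*=1/k$ were obtained in the proof of Theorem~\ref{Thm.2.3}.

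Next I would check the inequality conditions. Fix $i\in V\backslash V^*$. By hypothesis there are at least $r+1$ vertices $j\in V^*$ with $(i,j)\in E$ and $w_i\ge w_j$; writing $k(j)$ for the component containing such a $j$, we have $w_j=w_{k(j)}'$ and $x_j^*=\bar{w}/w_{k(j)}'$, so $\tilde{A}_{i,j}x_j^*=(w_i+w_j)x_j^*/2\ge w_j x_j^*=w_{k(j)}'(\bar{w}/w_{k(j)}')=\bar{w}$. Discarding the remaining nonnegative terms $\tilde{A}_{i,l}x_l^*$ and summing over at least $r+1$ such $j$ yields $p_i(x^*)\ge(r+1)\bar{w}=\lambda^*$. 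Theorem~\ref{Thm.4.1} then gives that $x^*$ is a Nash equilibrium.

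The one point that needs care is the first step: one must use $r$-regularity of $G^*$ \emph{as a subnetwork}, so that each vertex has precisely $r$ in-network neighbors (the diagonal contributing the additional $1$), in conjunction with the per-component constant-weight hypothesis, so that both the self-term $\tilde{A}_{i,i}x_i^*$ and every neighbor-term $\tilde{A}_{i,j}x_j^*$ collapse to $\bar{w}$ even though distinct components may carry very different weights. Once this cancellation is established, everything else is the same bookkeeping as in Theorems~\ref{Thm.2.3} and~\ref{Thm.4.2}.
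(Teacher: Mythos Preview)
Your proposal is correct and follows essentially the same route as the paper: compute $p_i(x^*)$ on $V^*$ using that all in-network neighbors of $i\in V_k^*$ lie in $V_k^*$ and carry weight $w_k'$, obtain $\lambda^*=(r+1)\bar{w}$, and then bound $p_i(x^*)$ on $V\backslash V^*$ term-by-term via $(w_i+w_j)x_j^*/2\ge w_jx_j^*=\bar{w}$ for each of the $r+1$ guaranteed neighbors. The only cosmetic difference is that you verify the stated $x^*$ directly (and explicitly check $\Sigma_i x_i^*=1$), whereas the paper derives $x^*$ and $\lambda^*$ from the equality system first; you note this alternative yourself.
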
  

\begin{proof}
Since $G^*$ supports $x^*$, $x_i^* > 0$ for all $i\in V^*$ and $x_i^* = 0$ for all $i\in V\backslash V^*$. It follows that $p_i(x^*) = \Sigma_j \tilde{A}_{i,j} x_j^* = \Sigma_{j\in V_k^*} \tilde{A}_{i,j} x_j^* = w_k'\Sigma_{(i,j)\in E_k^*} x_j^*$ for all $i\in V_k^*$. Let $w_k'\Sigma_{(i,j)\in E_k^*} x_j^* = \lambda^*$ for all $i\in V_k^*$. By adding the equations for all $i\in V_k^*$, we obtain $w_k'(r + 1)\Sigma_{j\in V_k^*} x_j^* = m_k\lambda^*$. By adding the latter equations for all $k$, we then obtain $\lambda^* = (r+1) \bar{w}$. By solving the equations for $x_i^*$, we also have $x_i^* = \bar{w}/w_k'$ for all $i\in V_k^*$. Note that for any node $i\in V\backslash V^*$,  
\begin{eqnarray*}
& p_i(x^*) = \Sigma_j \tilde{A}_{i,j} x_j^* = \Sigma_k \Sigma_{j\in V_k^*} \tilde{A}_{i,j} x_j^* & \\
& = \Sigma_k \Sigma_{j\in V_k^*,(i,j)\in E}\hspace{1pt} (w_i + w_j) x_j^* / 2. &
\end{eqnarray*}
Since there are at least $r+1$ nodes $l\in V^*$ such that $(i,l)\in E$ and $w_i\ge w_l$,
\begin{eqnarray*}
&p_i(x^*) \ge \Sigma_k \Sigma_{l\in V_k^*, (i,l)\in E}\hspace{1pt} (w_i + w_l) x_l^* / 2 & \\
& \ge \Sigma_k \Sigma_{l\in V_k^*, (i,l)\in E}\hspace{1pt} w_l\hspace{1pt} x_l^* & \\
& = \Sigma_k \Sigma_{l\in V_k^*, (i,l)\in E}\hspace{1pt} \bar{w} \ge (r+1)\bar{w} = \lambda^*. &
\end{eqnarray*}
By Theorem~\ref{Thm.4.1}, $x^*$ is a Nash equilibrium. 
\end{proof}

\begin{theorem}
\label{Thm.4.5}
Let $x^*\in S$ be a strategy for the weighted social distancing game on $G = (V,E)$, with $\tilde{A} = WAW$. Let $G^* = (V^*,E^*)$ be the supporting subnetwork for $x^*$. Assume that $G^*$ consists of $m$ connected components $G_k^*=(V_k^*,E_k^*)$, each assigned a weight $w_k'$, $k = 1,\ldots,m$. Then, if $G^*$ is a maximal $r$-regular subnetwork and if for any node $i\in V\backslash V^*$, there are at least $r+1$ nodes $j\in V^*$ such that $(i,j)\in E$ and $w_i\ge w_j$, $x^*$ is a Nash equilibrium for the game, with $x_i^{*} = \tilde{w}^2/w_k'^2$ for all $i\in V_k^*$, $k = 1,\ldots,m$, and $x_i^* = 0$ for all $i\in V\backslash V^*$, where $1/\tilde{w}^2 = \Sigma_k m_k/w_k'^2$, with $m_k$ being the size of $V_k^*$.
\end{theorem}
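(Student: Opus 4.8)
The plan is to follow the proof of Theorem~\ref{Thm.4.4} essentially verbatim, replacing the additive weight $\tilde{A}_{i,j} = (w_iA_{i,j}+A_{i,j}w_j)/2$ by the multiplicative one $\tilde{A}_{i,j} = w_iA_{i,j}w_j$ throughout and squaring the weight factors where they occur, and then to invoke Theorem~\ref{Thm.4.1}. First I would record the elementary facts about $\tilde{A} = WAW$ restricted to $G^*$: since every node of a component $G_k^*$ carries the same weight $w_k'$, for $i,j\in V_k^*$ we have $\tilde{A}_{i,j} = w_iA_{i,j}w_j = w_k'^2 A_{i,j}$, and in particular $\tilde{A}_{i,i} = w_k'^2$ since $A_{i,i} = 1$. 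Moreover, because $G_k^*$ is a connected component of $G^*$ and $E^* = \{(i,j)\in E:\ i,j\in V^*\}$, every $G$-neighbour of a node $i\in V_k^*$ that lies in $V^*$ in fact lies in $V_k^*$; hence $p_i(x^*) = \Sigma_j \tilde{A}_{i,j}x_j^* = w_k'^2\,\Sigma_{j\in V_k^*} A_{i,j}x_j^*$ for $i\in V_k^*$.

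Next I would verify the equality conditions of Theorem~\ref{Thm.4.1}. Imposing $w_k'^2\,\Sigma_{j\in V_k^*}A_{i,j}x_j^* = \lambda^*$ for all $i\in V_k^*$, summing over $i\in V_k^*$, and using that the adjacency matrix $A^{(k)}$ of the $r$-regular network $G_k^*$ has exactly $r+1$ ones in each column (the $r$ neighbours plus the unit diagonal), I get $w_k'^2(r+1)\,\Sigma_{j\in V_k^*}x_j^* = m_k\lambda^*$; summing this over $k$ and using $\Sigma_j x_j^* = 1$ gives $\lambda^* = (r+1)\tilde{w}^2$, after which back-substitution forces $x_i^* = \tilde{w}^2/w_k'^2$ on each $V_k^*$. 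Equivalently one may simply insert the claimed $x^*$ into $p_i$ and check that $p_i(x^*) = w_k'^2\cdot(r+1)\tilde{w}^2/w_k'^2 = (r+1)\tilde{w}^2$ for every $i\in V^*$, a value independent of $k$; a one-line computation confirms $\Sigma_i x_i^* = \tilde{w}^2\,\Sigma_k m_k/w_k'^2 = 1$, so $x^*\in S$.

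Then I would verify the inequality conditions. Fix $i\in V\backslash V^*$ and drop the nonnegative terms with $j\notin V^*$, so that $p_i(x^*) = \Sigma_{j\in V^*,\,(i,j)\in E} w_iw_jx_j^*$. By hypothesis there are at least $r+1$ distinct nodes $l\in V^*$ with $(i,l)\in E$ and $w_i\ge w_l$; writing $l\in V_k^*$ so that $w_l = w_k'$, the corresponding term equals $w_iw_k'x_l^* = w_iw_k'\cdot\tilde{w}^2/w_k'^2 = (w_i/w_k')\,\tilde{w}^2 \ge \tilde{w}^2 = w_l^2x_l^*$, the inequality being exactly where $w_i\ge w_l$ is used. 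Keeping only these at least $r+1$ terms yields $p_i(x^*)\ge (r+1)\tilde{w}^2 = \lambda^*$, and Theorem~\ref{Thm.4.1} then completes the proof.

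I do not anticipate a genuine obstacle: this is the multiplicative twin of Theorem~\ref{Thm.4.4}. The two points that deserve a moment's care are (i) justifying that the uniform component weight $w_k'$ really governs the entire sum defining $p_i(x^*)$ for $i\in V_k^*$, i.e.\ that no edge of $G$ escapes $V_k^*$ into another component, which is precisely what ``connected component of $G^*$'' buys; and (ii) the single algebraic inequality $w_iw_lx_l^*\ge w_l^2x_l^*$, which is where the assumption $w_i\ge w_j$ on the at least $r+1$ outside links enters, and which plays the role that $(w_i+w_l)/2\ge w_l$ played in the additive case of Theorem~\ref{Thm.4.4}.
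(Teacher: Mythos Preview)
Your proposal is correct and follows essentially the same route as the paper's own proof: reduce $p_i(x^*)$ on each component $V_k^*$ to $w_k'^2\sum_{j\in V_k^*}A_{i,j}x_j^*$, sum to obtain $\lambda^*=(r+1)\tilde{w}^2$ and $x_i^*=\tilde{w}^2/w_k'^2$, and for $i\in V\backslash V^*$ bound $p_i(x^*)$ from below by keeping only the at least $r+1$ terms with $w_i\ge w_l$ and using $w_iw_lx_l^*\ge w_l^2x_l^*=\tilde{w}^2$. Your explicit justification that neighbours of $i\in V_k^*$ within $V^*$ must lie in $V_k^*$, and your alternative direct-verification of the equality conditions, are minor clarifications but do not change the argument.
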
  

\begin{proof}
Since $G^*$ supports $x^*$, $x_i^* > 0$ for all $i\in V^*$ and $x_i^* = 0$ for all $i\in V\backslash V^*$. It follows that $p_i(x^*) = \Sigma_j \tilde{A}_{i,j} x_j^* = \Sigma_{j\in V_k^*} \tilde{A}_{i,j} x_j^* = w_k'^2\Sigma_{(i,j)\in E_k^*} x_j^*$ for all $i\in V_k^*$. Let $w_k'^2\Sigma_{(i,j)\in E_k^*} x_j^* = \lambda^*$ for all $i\in V_k^*$. By adding the equations for all $i\in V_k^*$, we obtain $w_k'^2(r + 1)\Sigma_{j\in V_k^*} x_j^* = m_k\lambda^*$. By adding the latter equations for all $k$, we then obtain $\lambda^* = (r+1) \tilde{w}^2$. By solving the equations for $x_i^*$, we also have $x_i^* = \tilde{w}^2/w_k'^2$ for all $i\in V_k^*$. Note that for any node $i\in V\backslash V^*$,  
\begin{eqnarray*}
& p_i(x^*) = \Sigma_j \tilde{A}_{i,j} x_j^* = \Sigma_k \Sigma_{j\in V_k^*} \tilde{A}_{i,j} x_j^* & \\
& = \Sigma_k \Sigma_{j\in V_k^*,(i,j)\in E}\hspace{1pt} w_i w_j\hspace{1pt} x_j^*. &
\end{eqnarray*}
Since there are at least $r+1$ nodes $l\in V^*$ such that $(i,l)\in E$ and $w_i\ge w_l$,
\begin{eqnarray*}
&p_i(x^*) \ge \Sigma_k \Sigma_{l\in V_k^*, (i,l)\in E}\hspace{1pt} w_i w_l\hspace{1pt} x_l^* & \\
& \ge \Sigma_k \Sigma_{l\in V_k^*, (i,l)\in E}\hspace{1pt} w_l^2\hspace{1pt} x_l^* & \\
& = \Sigma_k \Sigma_{l\in V_k^*, (i,l)\in E}\hspace{1pt} \tilde{w}^2 \ge (r+1)\tilde{w}^2 = \lambda^*. &
\end{eqnarray*}
By Theorem~\ref{Thm.4.1}, $x^*$ is a Nash equilibrium. 
\end{proof}

Note that in Theorem~\ref{Thm.4.2}, \ref{Thm.4.3}, \ref{Thm.4.4}, \ref{Thm.4.5}, the conditions for any node $i\in V\backslash V^*$, there must be at least $r+1$ nodes $j\in V^*$ such that $(i,j)\in E$ and $w_i\ge w_j$ are sufficient but not necessary. In general, the inequality conditions in Theorem~\ref{Thm.4.1} may need to be tested. In any case, as an example, consider the network in Figure~\ref{Fig_5}. Assume that the nodes in $\{1,2,3,4\}$ are assigned a weight $w_1'=3$, $\{5,9,10\}$ assigned $w_2'=2$, $\{7,13,14\}$ assigned $w_3'=2$, $\{6,11,12\}$ assigned $w_4'=1$, and $\{8,15,16\}$ assigned $w_5'=1$. Assume that additive weights are used for the calculation of the social contacts. Consider the maximal $2$-regular subnetwork formed by $\{5,6,7,8,9,10,11,12,13,14,15,16\}$. Then, $\bar{w} = 1/(3/w_2'+3/w_3'+3/w_4'+3/w_5')=1/9$. Let $x_1^*=x_2^*=x_3^*=x_4^*=0$, $x_5^*=x_9^*=x_{10}^*=\bar{w}/w_2'=1/18$, $x_7^*=x_{13}^*=x_{14}^*=\bar{w}/w_3'=1/18$, $x_6^*=x_{11}^*=x_{12}^*=\bar{w}/w_4'=2/18$, and $x_8^*=x_{15}^*=x_{16}^*=\bar{w}/w_5'=2/18$. Then, for any node $i\in \{1,2,3,4\}$, there are $3$ nodes $j\in \{5,6,7,8,9,10,11,12,13,14,15,16\}$ such that $i$ and $j$ are connected and $w_i\ge w_j$ since $w_i = 3$ while $w_j = 1$ or $2$. Therefore, by Theorem~\ref{Thm.4.4}, $x^*$ is a Nash equilibrium, with the total amount of social contacts minimized to $\lambda^*=(r+1)\bar{w}=1/3$.

In previous sections, we have discussed the rigidity, flexibility, and fragility of distancing strategies for social distancing games. These properties all depend on the structures of the supporting subnetworks. Therefore, Theorem~\ref{Thm.3.1}, \ref{Thm.3.2}, \ref{Thm.3.3}, and \ref{Thm.3.4} can all be extended straightforwardly to the equilibrium strategies for weighted social distancing games since the weighting schemes we have considered do not change the network structures. In particular, for a given $r$-regular network, if the same weight is assigned to the nodes of each component of the network, as we have assumed in the above discussion, we can show that (a) an equilibrium strategy of the weighted game, with either additive or multiplicative weights, must be strongly rigid if its supporting subnetwork is a maximal independent set; (b) it is flexible if its supporting subnetwork is a maximal $r$-regular subnetwork, $r>0$; (c) it is weakly rigid if its supporting network is a maximal $r$-regular subnetwork whose components are all minimal $r$-regular subnetworks, $r>0$; and (d) it is fragile if its supporting subnetwork is an $r$-regular subnetwork, $r>0$, which is connected but not minimal, or disconnected but at least one of its components is not minimal. The proofs follow almost the same steps as in the proofs for Theorem~\ref{Thm.3.1}, \ref{Thm.3.2}, \ref{Thm.3.3}, and \ref{Thm.3.4}. We will not elaborate further.

\section{Concluding Remarks}
\label{concluding_remarks}

Much work has been done in the past on social distancing as a behavioral factor affecting the development of an infectious disease, but research on how to effectively practice and manage social distancing is lacking. The latter can in fact be an even more important research subject, for the effectiveness of social distancing would be greatly undermined if without effective practicing. We have considered a social distancing problem for how a population, when in a world with a network of social sites, chooses some sites to visit or stay while avoiding or closing down some others so that the social contacts of the population across the network can be minimized. We have modeled this problem as a network population game with the Nash equilibrium of the game corresponding to an optimal distancing strategy. The social sites that are selected at equilibrium form a subnetwork, which we have recognized as a structural support for the equilibrium strategy. 

We have shown that a large class of equilibrium strategies can be supported by a maximal $r$-regular subnetwork. The latter includes many well studied network types, which are easy to identify or construct, and can be completely disconnected for the most strict isolation (with $r=0$) or allow certain degree of connectivities for more flexible distancing (with $r>0$). We have derived the equilibrium strategies on (a) maximal $0$-regular subnetworks known as maximal independent sets; and (b) maximal $r$-regular subnetworks for $r>0$. We have also extended these results to weighted networks and derived equilibrium strategies with additive and multiplicative weights on (c) maximal weighted $0$-regular subnetworks; and (d) maximal $r$-regular weighted subnetworks for $r>0$.

We have introduced the concept of rigidity, flexibility, and fragility of a social distancing strategy, and analyzed these properties for strategies on different types of $r$-regular subnetworks. We have shown that (a) strategies on maximal $0$-regular subnetworks are strictly rigid; (b) strategies on maximal $r$-regular subnetworks are flexible for all $r>0$; (c) strategies on maximal $r$-regular subnetworks whose components are minimal $r$-regular subnetworks are weakly rigid for all $r>0$; and (d) strategies on maximal $r$-regular subnetworks with at least one non-minimal $r$-regular component are fragile for all $r>0$. These results apply to both weighted and unweighted networks. 
  
Our work is a step towards developing a general theoretical and computational framework for the study of social distancing in networked social environments. It is however open for many further research efforts. First, new dynamic epidemic models can be built for populations distributed over social networks as determined by the equilibrium strategies for the social distancing games. Different from those for well mixed populations, such models may provide more realistic estimates on various epidemic properties such as contact rates, transmission rates, and hence the infectious rates, etc. Dynamic models on structured populations are always preferred because populations are always heterogeneous in terms of physical or social proximities \cite{Brauer2011}.

Second, the network types we have discussed are $r$-regular networks because they are easy to identify or construct, yet cover a broad range of networks with varying degrees of connectivities. However, in reality, the networks may not always be so ``regular''. The population may prefer other types of network structures such as spanning trees, extended stars, or small-world networks, all with low degrees of connectivities \cite{Lewis2009}. In addition, the structures may also change dynamically. It would be interesting to find the equilibrium conditions of the distancing strategies on such networks, including related distancing properties such as rigidity, flexibility, and fragility of the strategies.

Third, social distancing always brings economic costs or social sacrifices. Different distancing strategies may have different social or economic consequences \cite{Chakradhar2020,Miller2020,Thunstrome2020,Fenichel2013}. Reducing social contacts is a main goal as far as fighting a pandemic is concerned, but keeping certain levels of social or economic activities can be important as well especially during a possibly long period of recovery from the pandemic \cite{Kissler2020,Long2020,Mann2020}. Therefore, there must be a tradeoff between choosing rigid and flexible distancing strategies. A cost-benefit function may be defined as a function of the amount of social contacts of any distancing strategy. The social distancing game can then be generalized to find an optimal distancing strategy that maximizes distancing benefits at lowest possible social or economic costs. Future research along this line can be of great practical interest. 

In this paper, we have focused on distancing strategies on maximal $r$-regular subnetworks, which are supposed to be easy to identify or construct \cite{Cormen2006,Erickson2019}. Still, when a large network is given, it takes time to find a desired $r$-regular subnetwork. It is well known that the problem to find the maximum $0$-regular subgraph, i.e., the maximum independent set of a given graph is NP-hard \cite{Garey1979}. The problem to find the maximum $1$-regular subgraph of a given graph, known as the maximum strong matching problem, is also NP-hard \cite{Stockmeyer1982,Cameron1989}. In fact, it has been proved that the problem to find the maximum $r$-regular subgraph of a given graph is NP-hard for any $r \ge 0$ \cite{Cardoso2006,Gupta2006}. Fortunately, for our purpose, we do not have to find the maximum $r$-regular subnetwork. A maximal $r$-regular subnetwork would be sufficient to support an equilibrium distancing strategy, which is relatively easier to find than the maximum one.    

Finally, we would also like to point out that the social networking game as described in the first section of the paper has been studied previously as a network population game in \cite{Wang2019}. It becomes relevant when social distancing comes into play, for the latter is exactly a complementary game to the former. The social networking game has been formulated in different forms and for different purposes in the past. It has been used for the solution of the maximum clique problem, since the maximum clique of a given network corresponds to the Nash equilibrium of the social networking game on this network with the largest number of positive frequencies \cite{Bomze1997}. It has also been used for modeling the evolution of allele selection in genetic research as an evolutionary game on genetic selection graphs \cite{Vickers1988}. The work on the social distancing game on a network in this paper benefits from many ideas and results coming out from these previous studies.

%\section{...}
%\subsection{...}
%\subsection{...}
%\subsection{...}
%

\section*{Acknowledgement}

The author would like to thank Claus Kadelka, Audrey McCombs, and Rana Parshad to share their work with the author, and provide valuable suggestions on this work. The author would also like to acknowledge the support from the Simons Foundation through the Mathematics and Physical Sciences Collaboration Grants for Mathematicians.

%%-----------------------------
%%      your bibliography
%%-----------------------------

%\bibliographystyle{siamplain}

\end{document}